\providecommand{\U}[1]{\protect\rule{.1in}{.1in}}
\newtheorem{theorem}{Theorem}
\newtheorem{lemma}[theorem]{Lemma}
\newenvironment{proof}[1][Proof]{\noindent\textbf{#1.} }{\ \rule{0.5em}{0.5em}}
\newcommand{\ul}[1]{\underline{#1}}
\newcommand{\EE}{\mathbb{E}}
\begin{document}

\title{ }

\begin{center}
\textit{Technical Report}

{\LARGE Marginal Structural Models }

{\LARGE for Time-varying Endogenous Treatments:\ }

{\LARGE A Time-Varying Instrumental Variable Approach}

{\Large Eric J. Tchetgen Tchetgen}

{\Large Haben Michael }

{\Large Yifan Cui } $\ $

{\large Department of Statistics}

{\large The Wharton School\ }

{\large University of Pennsylvania}

\textbf{Abstract}
\end{center}

\noindent Robins (1998) introduced marginal structural models (MSMs),\ a
general class of counterfactual models for the joint effects of time-varying
treatment regimes in complex longitudinal studies subject to time-varying
confounding. He established identification of MSM parameters under a
sequential randomization assumption (SRA), which essentially rules out
unmeasured confounding of treatment assignment over time. \ In this technical
report, we consider sufficient conditions for identification of MSM parameters
with the aid of a time-varying instrumental variable, when sequential
randomization fails to hold due to unmeasured confounding. Our identification
conditions essentially require that no unobserved confounder predicts
compliance type for the time-varying treatment, the longitudinal
generalization of the identifying condition of Wang and Tchetgen Tchetgen
(2018). Under this assumption, We derive a large class of semiparametric
estimators that extends standard inverse-probability weighting (IPW), the most
popular approach for estimating MSMs under SRA, by incorporating the
time-varying IV through a modified set of weights. The set of influence
functions for MSM parameters is derived under a semiparametric model with sole
restriction on observed data distribution given by the MSM, and is shown to
provide a rich class of multiply robust estimators, including a local
semiparametric efficient estimator.

KEY WORDS: marginal structural models, time-varying endogeneity, instrumental
variables, multiple robustness, local efficiency.

\bigskip\pagebreak

\section{\noindent Introduction}

Robins (1998,1999, 2000a) introduced a new class of counterfactual models
known as marginal structural models (MSMs) that encode the joint causal
effects of time-varying treatment subject to time-varying confounding.\ For
identification, Robins relied on a sequential randomization assumption (SRA)
which essentially rules out unmeasured confounding of the time-varying
treatment. In this technical report, we consider sufficient conditions for
identification of MSM parameters with the aid of a time-varying instrumental
variable, when sequential randomization fails to hold due to unmeasured
confounding. Our identification conditions essentially require longitudinal
generalizations of (i) IV relevance, (ii) exclusion restriction, and (iii)
IV\ independence assumptions, together with a key assumption (iv) that no
unobserved confounder predicts compliance type for the time-varying treatment,
a longitudinal generalization of the identification condition of Wang and
Tchetgen Tchetgen (2018). Under these assumptions, we derive a large class of
semiparametric estimators which extends standard inverse-probability weighting
(IPW), the most common approach for estimating MSMs under SRA (Robins et al.
2000, Hern\'{a}n et al, 2000, 2001), that incorporates the time-varying IV
through a modified set of weights. The set of influence functions for MSM
parameters under IV identification is derived for a semiparametric model with
sole restriction on the observed data distribution given by the MSM, and is
shown to provide a rich class of multiply robust estimators, including a
locally semiparametric efficient estimator.

Prior to the current work, Robins (1994) developed a general framework for
identification and estimation of causal effects of time-varying endogenous
treatments using a time-varying instrumental variable under a structural
nested model (SNM).\ As described in Robins (2000a), parameters of an SNM can
under certain conditions be interpreted as MSM parameters, in which case,
Robins (1994) provides alternative identification conditions to ours.\ In
contrast, the proposed methodology is more general as it directly targets MSM
parameters irrespective of whether or not they can be interpreted as
parameters of an equivalent SNM. \ 

\section{Notation and definitions}

Continuous time is denoted by $t$ and is measured in months since the
beginning of a subject's follow-up. The index $j$ is often used when we wish
to indicate an integer number of months. $J$ corresponds to the administrative
end of follow-up, recorded in whole months. \ Notice that as staggered entry
of participants is a common feature of longitudinal studies, $J$ is considered
random. We use capital letters to represent random variables and lower-case
letters to represent possible realizations (values) of random variables.
\ $A(j)$ denotes a binary treatment taken by a subject in ($j,j+1]$, and
$L(j)$ is a vector of relevant prognostic factors for outcomes
$Y(j+1),...Y(J)$. We assume that recorded data on the treatment and prognostic
factors do not change except at these times; moreover, $L(j)$ temporally
precedes $A(j)$, and $Y(j)$ is included in $L(j).$ \ For any time dependent
variable, we use overbars to denote the history of that variable up to and
including $t;$ for example, the covariate process through $t$ is $\overline
{L}(t)=\left\{  L(0),L(1),...,L(t)\right\}  =\left\{
L(0),L(1),...,L(int\left[  t\right]  )\right\}  $ where $int\left[  t\right]
$ is the greatest integer less than or equal to $t.$ \ Note that throughout,
unless necessary, we suppress the subscript denoting individual, because we
assume that the random vector for each subject is drawn independently from a
distribution common to all subjects. \ We use the symbol $\amalg$ to indicate
statistical independence; for example $A\amalg B|D$ means that $A$ is
conditionally independent of $B$ given $D.$ \ Finally, for any $O_{i}$, define
$\mathbb{P}_{n}\left[  O\right]  =\sum\limits_{i=1}^{n}O_{i}/n$.

In order to formally define MSMs, we need to introduce counterfactual or
potential outcomes. \ Neyman (1923) was the first to use counterfactual
outcomes to analyze the causal effect of time independent treatments in
randomized experiments. \ Later on, Rubin (1974)\ and Holland (1986) adopted
Neyman's idea and demonstrated the usefulness of counterfactuals in the
analysis of the causal effects of time-independent treatments from
observational data. \ Robins (1986,1987)\ proposed a formal counterfactual
theory of causal inference that extended Neyman's time-independent treatment
theory to longitudinal studies with both direct and indirect effects and
sequential time-varying treatments and confounders. \ Throughout, we assume no
censoring, although we note that methods to address dependent censoring
described in Robins (1998) can easily be adapted to our setting. \ For a
specific fixed treatment history $\overline{a}=\left(  a\left(  0\right)
,a\left(  1\right)  ,...,a\left(  J-1\right)  \right)  ,\overline
{L}_{\overline{a}}$ is defined to be the random vector representing a
subject's covariate process had (possibly contrary to fact) the subject been
treated [i.e through time $J-1]$ with the particular treatment regime
$\overline{a}$ rather than his or her observed treatment history $\overline
{A}=\overline{A}\left(  J-1\right)  .$ Note that $\overline{a}\left(
t\right)  $ is a possible realization of the random variable $\overline
{A}\left(  t\right)  $. \ For each possible history $\overline{a},$ we are
assuming that a subject's potential covariate/outcome process $\left\{
\overline{L}_{\overline{a}}\right\}  $ is well defined, although generally
unobserved. \ Each individual therefore has a corresponding set of
counterfactual variables $\mathcal{L}_{\mathcal{A}}=\left\{  \overline
{L}_{\overline{a}}=\overline{L}_{\overline{a}}\left(  J\right)  :\overline
{a}\in\mathcal{A}\right\}  $ where $\mathcal{A}$ is the support of
$\overline{A},$ and throughout, in accordance with reality, the future cannot
cause the past, i.e., $L_{\overline{a}}(j)=L_{\overline{a}(j-1)}(j),
j=1,\ldots,J-1.$

\section{Brief Review of MSM inference under sequential randomization}

\ An MSM for $\left\{  \overline{Y}_{\overline{a}}:\overline{a}\in
\mathcal{A}\right\}  $ places restriction on the marginal distribution of the
$\overline{Y}_{\overline{a}}$ possibly conditional on baseline variables $V\in
L(0)$.\ Robins (1998, 2000) describes a large number of MSMs reproduced below;
however we note that this is certainly not exhaustive:

\textbf{Model 1. in Models 1.1-1.3, }suppose that $Y(j)\equiv0$ for $j<J$ and
$Y\equiv Y\left(  J)\right)  $ at the end of follow-up $J=K+1$ w.p.1. for a
constant $K.$

\textbf{Model 1.1: Non-linear least-squares: }$E\left(  Y_{\overline{a}%
}|V\right)  =g\left(  \overline{a},V;\beta_{0}\right)  ,$ where $g\left(
\cdot,\cdot;\cdot\right)  $ is a known function.

\textbf{Model 1.2: Semiparametric Regression: }$\eta\left(  E\left(
Y_{\overline{a}}|V\right)  \right)  =g\left(  \overline{a},V;\beta_{0}\right)
+g^{\ast}\left(  V\right)  ,$ where $\eta$ is a known monotone link function,
$g^{\ast}$ is an unknown unrestricted function, and $g\left(  \cdot
,\cdot;\cdot\right)  $ is a known function with $g\left(  0,\cdot
;\cdot\right)  =0$.

\textbf{Model 1.3. Stratified Transformation model:}$\Pr\left(  R\left(
\overline{a},V;\beta_{0}\right)  \leq r|V\right)  =F_{0}\left(  r|V\right)  ,$
$F_{0}$ is an unknown distribution function, $R\left(  \overline{a}%
,V;\beta_{0}\right)  =r\left(  Y_{\overline{a}},\overline{a},V;\beta\right)  $
is a known increasing function of $Y_{\overline{a}}$ satisfying $r\left(
y,\overline{a},V;\beta\right)  =y$ if $\overline{a}=0$ or $\beta=0.$

\textbf{Model 1.4. Multivariate non-linear least squares: }$\ $Suppose that
the outcome is observed longitudinally, so that the MSM restricts the marginal
joint distribution%
\[
\left\{  \overline{Y_{\overline{a}}}\left(  K+1\right)  :\overline{a}\right\}
=\{Y_{a(0)}(1),Y_{\overline{a}(1)}(2),...,Y_{\overline{a}(K)}(K+1):\overline
{a}\};
\]
the multivariate non-linear least squares MSM specifies
\[
E\left(  Y_{\overline{a}}(m)|V\right)  =g_{m}\left(  \overline{a}%
(m-1),V;\beta_{0}\right)  ,m=1,...,K+1,
\]
where $g_{m}$ are known functions.

\textbf{Model 2. }Suppose that $J=\infty\,,$ and $Y_{\overline{a}}$ is a
failure time process which jumps from $\ 0$ to $1$ at some particular time and
stays at $1$ thereafter. Define the failure time $T_{\overline{a}}$ by the
equation $Y_{\overline{a}}\left(  T_{\overline{a}}\right)  =1$ and
$Y_{\overline{a}}\left(  T_{\overline{a}}^{-}\right)  =0$. Let $\lambda
_{W}\left(  t\right)  $ denote the hazard function of $W.$

\textbf{Model 2.1.Cox Proportional Hazards model }%
\[
\lambda_{T_{\overline{a}}}(t|V)=\lambda_{0}\left(  t\right)  \exp\left(
r\left(  \overline{a}\left(  t^{-}\right)  ,t,\beta_{0},V\right)  \right)  ,
\]
where $r()$ is a known function which satisfies $r\left(  \overline
{\mathbf{0}},t,\beta,0\right)  =0.$

\textbf{Model 2.2.Stratified Cox Proportional Hazards model }%
\[
\lambda_{T_{\overline{a}}}(t|V)=\lambda_{0}\left(  t|V\right)  \exp\left(
r\left(  \overline{a}\left(  t^{-}\right)  ,t,\beta_{0},V\right)  \right)  ,
\]
where $r()$ is a known function which satisfies $r\left(  \overline
{\mathbf{0}},t,\beta,V\right)  =0.$

\textbf{Model 2.3.Stratified time-dependent Accelerated Failure Time model.}%
\[
\Pr\left(  R\left(  \overline{a},V;\beta_{0}\right)  \leq r|V\right)
=F_{0}\left(  r|V\right)  ,
\]
$F_{0}$ is an unknown distribution function, $R\left(  \overline{a}%
,V;\beta_{0}\right)  =r\left(  T_{\overline{a}},\overline{a},V;\beta\right)  $
is a known increasing function of $Y_{\overline{a}}$ satisfying $r\left(
y,\overline{a},V;\beta\right)  =y$ if $\overline{a}=0.$

Other MSMs possibly of interest include quantile MSMs, additive hazards MSMs
and restricted residual mean survival MSMs. \ While these and other possible
MSMs are not discussed herein, our results readily extend to these MSMs.
Having defined the underlying set of counterfactual variables and MSMs of
interest, we now consider how they relate to the observed data. \ Three
important assumptions are essential to the identification of the\ MSM
parameter $\beta_{0}$ from the observed data.\ \ First, of the many
counterfactual variables in $\mathcal{Y}_{\mathcal{A}}$, only one is
ultimately observed in a given individual. \ In fact, we observe a realization
of $\overline{Y}_{\overline{a}}$ only if the treatment history $\overline{a}$
is equal to a subject's actual treatment history $\overline{A};$ that is
$\overline{Y}=\overline{Y}_{\overline{A}}$ w.p.1. \ This identity constitutes
the fundamental \textquotedblright consistency\textquotedblright\ assumption
that links the counterfactual data $\overline{Y}_{\overline{a}}$ to the
observed data $(\overline{Y},\overline{A}).$ \ The next assumption is that
there are no unmeasured confounders for the effect of $A(j)$ on $\overline{Y}%
$, that is, for all treatment histories $\overline{a},$%

\begin{equation}
\overline{Y}_{\overline{a}}\amalg A(j)|\overline{A}(j-1)=\overline
{a}(j-1),\overline{L}(j),\text{ \ }j=1,\ldots,J. \label{sequential rand}%
\end{equation}
\ \ This assumption generalizes Rosembaum and Rubin's (1983) assumption of
ignorable treatment assignment to longitudinal studies with time-varying
treatments and confounders and is also referred to as the sequential
randomization assumption (SRA) (Robins, 1998). \ It states that, conditional
on treatment history and the history of all recorded covariates up to $j$,
treatment at $j$ is independent of the counterfactual random variables
$Y_{\overline{a}}\left(  j+1\right)  ,\ldots,Y_{\overline{a}}\left(  J\right)
.$ This will be true if, for example, all prognostic factors for $\overline
{Y}$ used by the physicians to determine whether treatment $A$ is given at $j$
are recorded in $\left(  \overline{A}(j-1),\overline{L}(j)\right)  .$ \ For
example, physicians generally check HIV infected patients' current CD4 count
before deciding whether or not he or she needs to initiate HAART (highly
active antiretroviral therapy) to delay death or progression to AIDS.
\ Clearly, because CD4 count also correlates with the patient's time of death
or progression to AIDS, the assumption of no unmeasured confounders would be
false if $\overline{L}(j)$ did not include patients' current CD4 count. \ 

In an observational study, the assumption of no unmeasured confounder cannot
be guaranteed to hold, and it is not subject to empirical test. However it
will hold to a reasonable approximation if good efforts are made to collect
data on the crucial covariates. \ Investigating the sensitivity to violations
of SRA through a formal sensitivity analysis is important but will not be
discussed in this paper. \ Robins, Greenland, and Hu (1999), and Robins,
Rotnitzky and Scharfstein (2000), have provided details on the theory of
sensitivity analysis in causal models. Below, we will consider instrumental
variable methods when SRA fails to hold.

We finally assume that the following positivity assumption holds. \ For all
$a(j)$ in the support of $A(j)$%
\[
\text{if }f\left(  \overline{L}(j),\overline{A}(j-1)\right)  >0\text{ then
}f(a(j)|\overline{L}(j),\overline{A}(j-1))>0.
\]
This assumption essentially states that if any set of subjects at time $j$
have the opportunity of continuing on a treatment regime $\overline{a}$ under
consideration, at least some will take that opportunity. Positivity is
actually a sufficient but not a necessary condition to apply the methods
described in this paper; see ref. (Robins 1998) for further details.

\ Consider the semiparametric model $\mathcal{M}_{tp}$ where (i) the treatment
process
\begin{equation}
f\left(  A(k)=1|\overline{L}(k),\overline{A}(k-1)\right)  ,\text{
}k=0,....,J-1,\text{is known,} \label{treatment}%
\end{equation}
with (ii) observed data $O=\left(  \overline{A}=\overline{A}\left(
J-1\right)  ,\overline{L}=\overline{L}\left(  J\right)  \right)  ;$ and (iii)
an MSM with target parameter $\beta_{0}$. Also define $\mathcal{M}_{tp}^{\ast
}$ as $\mathcal{M}_{tp}$ where in the data generating mechanism (ii),
(\ref{treatment}) is replaced with user-specified density (\ref{treatment}%
$^{\ast}$) $f^{\ast}\left(  A(k)=1|V,\overline{A}(k-1)\right)  ,$ and the
model is otherwise identical$.$ As noted by Robins, under $\mathcal{M}%
_{tp}^{\ast},$ MSMs 1-2 simplify to well-known statistical models, where
\textquotedblleft$^{\ast}$\textquotedblright\ denotes expectation under the model.

\textbf{Model 1.1: }$E^{\ast}\left(  Y|\overline{a},V\right)  =g\left(
\overline{a},V;\beta_{0}\right)  ,$ where $g\left(  \cdot,\cdot;\cdot\right)
$ is a known function.

\textbf{Model 1.2: }$\eta\left(  E^{\ast}\left(  Y|\overline{a},V\right)
\right)  =g\left(  \overline{a},V;\beta_{0}\right)  +g^{\ast}\left(  V\right)
$.

\textbf{Model 1.3. }$\Pr^{\ast}\left(  R\left(  \overline{a},V;\beta
_{0}\right)  \leq r|V\right)  =F_{0}^{\ast}\left(  r|V\right)  $, $R\left(
\overline{a},V;\beta_{0}\right)  =r\left(  Y,\overline{a},V;\beta\right)  $.

\textbf{Model 1.4. }$E^{\ast}\left(  Y(m)|\overline{a},V\right)  =g_{m}\left(
\overline{a}(m-1),V;\beta_{0}\right)  ,m=1,...,K+1.$

\textbf{Model 2.1.}$\lambda_{T}^{\ast}(t|\overline{a}\left(  t\right)
,V)=\lambda_{0}\left(  t\right)  \exp\left(  r\left(  \overline{a}\left(
t^{-}\right)  ,t,\beta_{0},V\right)  \right)  .$

\textbf{Model 2.2.}$\lambda_{T}^{\ast}(t|\overline{a}\left(  t\right)
,V)=\lambda_{0}\left(  t|V\right)  \exp\left(  r\left(  \overline{a}\left(
t^{-}\right)  ,t,\beta_{0},V\right)  \right)  $

\textbf{Model 2.3.}$\Pr^{\ast}\left(  R\left(  \overline{a},V;\beta
_{0}\right)  \leq r|V\right)  =F_{0}^{\ast}\left(  r|V\right)  ,$ $R\left(
\overline{a},V;\beta_{0}\right)  =r\left(  T,\overline{a},V;\beta\right)  $

Then, Robins established that all regular and asymptotically linear (RAL)
estimators $\widehat{\beta}\left(  h,\phi\right)  $ in $\mathcal{M}_{tp}$ can
be obtained by solving:%
\[
o_{p}\left(  n^{-1/2}\right)  =\mathbb{P}_{n}\widehat{D}\left(  O;h,\phi
,\beta\right)
\]
with $\mathbb{P}_{n}\widehat{D}\left(  h,\phi,\beta\right)  =\mathbb{P}%
_{n}\widehat{D}_{sm}\left(  h,\beta\right)  /\overline{\mathcal{W}}%
\mathcal{+}D_{tp}\left(  \phi\right)  $,
\begin{align*}
\overline{\mathcal{W}}  &  =%
{\displaystyle\prod\limits_{k=0}^{J-1}}
\mathcal{W}_{k}=%
{\displaystyle\prod\limits_{k=0}^{J-1}}
\frac{f\left(  A(k)|\overline{L}(k),\overline{A}(k-1)\right)  }{f^{\ast
}\left(  A(k)|V,\overline{A}(k-1)\right)  },\\
D_{tp}\left(  \phi\right)   &  =%
{\displaystyle\sum\limits_{k=0}^{J-1}}
\phi\left(  k,\overline{A}\left(  k\right)  ,\overline{L}(k)\right)
-\mathbb{E}\left(  \phi\left(  k,\overline{A}\left(  k\right)  ,\overline
{L}(k)\right)  |\overline{A}\left(  k-1\right)  ,\overline{L}(k)\right)
\end{align*}
and $\mathbb{P}_{n}\widehat{D}_{sm}\left(  h,\beta\right)  =\mathbb{P}%
_{n}V_{sm}^{\ast}\left(  h,\beta\right)  +o_{p}\left(  1\right)  ,$ where
$\left\{  \widehat{D}_{sm}\left(  h,\beta\right)  :h\right\}  $ and $\left\{
V_{sm}^{\ast}\left(  h,\beta\right)  :h\right\}  $ are the following familiar
estimating functions of $\beta$ of models 1-2 under (\ref{treatment}*) and
their associated influence functions$.$

\textbf{Model 1.1: }$\widehat{D}_{sm}\left(  h,\beta\right)  =V_{sm}^{\ast
}\left(  h,\beta\right)  $ where $V_{sm}^{\ast}\left(  h,\beta\right)
=h\left(  \overline{A},V\right)  \varepsilon\left(  \beta\right)
;\varepsilon\left(  \beta\right)  =Y-g\left(  \overline{A},V;\beta_{0}\right)
,$

\textbf{Model 1.2: For }$\eta\left(  x\right)  =x,$ $\widehat{D}_{sm}\left(
h,\beta\right)  =V_{sm}^{\ast}\left(  h,\beta\right)  =\left(  \varepsilon
\left(  \beta\right)  -h_{1}\left(  \overline{A},V\right)  \right)  \left(
h_{2}\left(  \overline{A},V\right)  -\mathbb{E}^{\ast}\left\{  h_{2}\left(
\overline{A},V\right)  |V\right\}  \right)  $. for any choice of $h_{1}$ and
$h_{2}$ of same dimension as $\beta.$ For $\eta(x)=\log(x/\left(  1-x)\right)
,$ let $p\left(  \beta\right)  =$expit$\left(  g\left(  \overline{a}%
,V;\beta_{0}\right)  +g^{\ast}\left(  V\right)  \right)  ,$ $\widehat{p}%
\left(  \beta\right)  =$expit$\left(  g\left(  \overline{a},V;\beta
_{0}\right)  +\widehat{g}^{\ast}\left(  V\right)  \right)  $ where
$\widehat{g}^{\ast}\left(  V\right)  $ is a $n^{1/4}-$ consistent estimatof of
$g^{\ast}\left(  V\right)  .\widehat{D}_{sm}\left(  h,\beta\right)
=\widehat{\varepsilon}\left(  \beta\right)  \left(  h_{2}\left(  \overline
{A},V\right)  -\mathbb{E}^{\ast}\left\{  h_{2}\left(  \overline{A},V\right)
\widehat{p}\left(  \beta\right)  (1-\widehat{p}\left(  \beta\right)
)|V\right\}  /\mathbb{E}^{\ast}\left\{  \widehat{p}\left(  \beta\right)
(1-\widehat{p}\left(  \beta\right)  )|V\right\}  \right)
,\widehat{\varepsilon}\left(  \beta\right)  =Y-\widehat{p}\left(
\beta\right)  ;$

$V_{sm}^{\ast}\left(  h,\beta\right)  =\varepsilon\left(  \beta\right)
\left(  h_{2}\left(  \overline{A},V\right)  -\mathbb{E}^{\ast}\left\{
h_{2}\left(  \overline{A},V\right)  p\left(  \beta\right)  (1-p\left(
\beta\right)  )|V\right\}  /\mathbb{E}^{\ast}\left\{  p\left(  \beta\right)
(1-p\left(  \beta\right)  )|V\right\}  \right)  $

\textbf{Model 1.3.}$\widehat{D}_{sm}\left(  h,\beta\right)  =V_{sm}^{\ast
}\left(  h,\beta\right)  =h\left(  R\left(  \beta_{0}\right)  ,\overline
{A},V\right)  -\int h\left(  R\left(  \beta_{0}\right)  ,\overline
{a},V\right)  dF^{\ast}\left(  \overline{a}|V\right)  $ where [??]

\textbf{Model 1.4. }$\widehat{D}_{sm}\left(  h,\beta\right)  =V_{sm}^{\ast
}\left(  h,\beta\right)  =h\left(  \overline{A},V\right)  \mathbf{\varepsilon
}\left(  \beta\right)  $ where $\mathbf{\varepsilon}\left(  \beta\right)
=\left(  \varepsilon_{1}\left(  \beta\right)  ,....,\varepsilon_{K+1}\left(
\beta\right)  \right)  ,$ $\varepsilon_{m}\left(  \beta\right)  =Y(m)-g_{m}%
\left(  \overline{a}(m-1),V;\beta_{0}\right)  ,m=1,...,K+1,$ and $h\left(
\overline{A},V\right)  $ is of dimension $\dim\left(  \beta\right)
\times\left(  K+1\right)  .$

\textbf{Model 2.1.}%
\[
\widehat{D}_{sm}\left(  h,\beta\right)  =\int dN(t)\left\{  h\left(
t,\overline{A},V\right)  -\frac{\mathbb{P}_{n}^{\ast}\left[  h\left(
t,\overline{A},V\right)  \exp\left(  r\left(  \overline{A}\left(  t\right)
,t,\beta_{0},V\right)  \right)  I(T\geq t)\right]  }{\mathbb{P}_{n}^{\ast
}\left[  \exp\left(  r\left(  \overline{A}\left(  t\right)  ,t,\beta
_{0},V\right)  \right)  I(T\geq t)\right]  }\right\}
\]
$.$%

\[
U_{sm}\left(  h,\beta\right)  =\int dM_{T}(t)\left\{  h\left(  t,\overline
{A},V\right)  -\frac{\mathbb{E}^{\ast}\left[  h\left(  t,\overline
{A},V\right)  \exp\left(  r\left(  \overline{A}\left(  t\right)  ,t,\beta
_{0},V\right)  \right)  I(T\geq t)\right]  }{\mathbb{E}^{\ast}\left[
\exp\left(  r\left(  \overline{A}\left(  t\right)  ,t,\beta_{0},V\right)
\right)  I(T\geq t)\right]  }\right\}  ,
\]
where $N_{T}(t)=I(T\leq t)$ and $dM_{T}(t)=dN(t)-\lambda_{T}(t|\overline
{A},V)I(T\geq t)dt.\,$

\textbf{Model 2.2.} $\widehat{D}_{sm}\left(  h,\beta\right)  $ and
$V_{sm}^{\ast}\left(  h,\beta\right)  $ are as above with $\mathbb{P}%
_{n}^{\ast}\left[  h\left(  t,\overline{A},V\right)  \exp\left(  r\left(
\overline{A}\left(  t\right)  ,t,\beta_{0},V\right)  \right)  I(T\geq
t)|\right]  $ and $\mathbb{P}_{n}^{\ast}\left[  \exp\left(  r\left(
\overline{A}\left(  t\right)  ,t,\beta_{0},V\right)  \right)  I(T\geq
t)\right]  $ in $\widehat{D}_{sm}\left(  h,\beta\right)  $ replaced by an
$n^{-1/4}$ consistent estimator of $\mathbb{\ \ }$ $\mathbb{E}^{\ast}\left[
h\left(  t,\overline{A},V\right)  \exp\left(  r\left(  \overline{A}\left(
t\right)  ,t,\beta_{0},V\right)  \right)  I(T\geq t)|V\right]  $ and
$\mathbb{E}^{\ast}\left[  \exp\left(  r\left(  \overline{A}\left(  t\right)
,t,\beta_{0},V\right)  \right)  I(T\geq t)|V\right]  $.

\textbf{Model 2.3.}%
\begin{align*}
\widehat{D}_{sm}\left(  h,\beta\right)   &  =\int_{0}^{\infty}dtI\left(
R\left(  \beta\right)  \leq t\right)  \left\{  H_{2}\left(  t,\beta\right)
-\mathbb{E}^{\ast}\left[  H_{2}\left(  t,\beta\right)  |V\right]  \right\} \\
&  +\int_{0}^{\infty}dN_{R\left(  \beta\right)  }\left(  t\right)  \left\{
H_{1}\left(  t,\beta\right)  -\mathbb{E}^{\ast}\left[  H_{1}\left(
t,\beta\right)  |V\right]  \right\}
\end{align*}

and for $j=1,2,H_{j}\left(  t,\beta\right)  =h_{j}\left(  t,\overline
{A}\left(  r^{-1}\left(  t,\overline{A},V,\beta\right)  \right)  ,V\right)
;V_{sm}^{\ast}\left(  h,\beta\right)  =D_{sm}\left(  h,\beta\right)
-\mathbb{E}^{\ast}\left[  D_{sm}\left(  h,\beta\right)  |V\right]  .$

Note that in Model 2.1,
\begin{align*}
&  \mathbb{E}^{\ast}\left[  h\left(  t,\overline{A},V\right)  \exp\left(
r\left(  \overline{A}\left(  t\right)  ,t,\beta,V\right)  \right)  I(T\geq
t)\right] \\
&  =\mathbb{E}\left[  \exp\left(  r\left(  \overline{A}\left(  t\right)
,t,\beta,V\right)  \right)  I(T\geq t)/\overline{\mathcal{W}}\left(
int\left(  t\right)  \right)  \}\right]
\end{align*}
and
\begin{align*}
&  \mathbb{E}^{\ast}\left[  \exp\left(  r\left(  \overline{A}\left(  t\right)
,t,\beta,V\right)  \right)  I(T\geq t)\right] \\
&  =\mathbb{E}\left[  \exp\left(  r\left(  \overline{A}\left(  t\right)
,t,\beta,V\right)  \right)  I(T\geq t)\overline{/\mathcal{W}}\left(
int\left(  t\right)  \right)  \right]
\end{align*}
and likewise in Model 2.2
\begin{align*}
&  \mathbb{E}^{\ast}\left[  h\left(  t,\overline{A},V\right)  \exp\left(
r\left(  \overline{A}\left(  t\right)  ,t,\beta,V\right)  \right)  I(T\geq
t)|V\right] \\
&  =\mathbb{E}\left[  \exp\left(  r\left(  \overline{A}\left(  t\right)
,t,\beta,V\right)  \right)  I(T\geq t)/\overline{\mathcal{W}}\left(
int\left(  t\right)  \right)  \}|V\right]
\end{align*}
and%

\begin{align*}
&  \mathbb{E}^{\ast}\left[  \exp\left(  r\left(  \overline{A}\left(  t\right)
,t,\beta,V\right)  \right)  I(T\geq t)|V\right] \\
&  =\mathbb{E}\left[  \exp\left(  r\left(  \overline{A}\left(  t\right)
,t,\beta,V\right)  \right)  I(T\geq t)/\overline{\mathcal{W}}\left(
int\left(  t\right)  \right)  |V\right]  .
\end{align*}
Robins (1998) also established that for fixed $h,$ the optimal choice of
$\phi$ in model $\mathcal{M}_{tp}$ is given by $\phi_{opt}\left(
k,\overline{A}\left(  k\right)  ,\overline{L}(k)\right)  $ $=-\mathbb{E}%
\left(  D_{sm}\left(  h,\beta\right)  /\overline{\mathcal{W}}|\overline
{A}\left(  k\right)  ,\overline{L}(k)\right)  $, in the sense that given $h,$
there is no estimator with asymptotic variance smaller than $\widehat{\beta
}\left(  h,\phi_{opt}\right)  .$ Let $C_{J}=\mathbb{E}\left(  D_{sm}\left(
h,\beta\right)  |\overline{A}\left(  J-1\right)  ,\overline{L}(J-1)\right)
,\ $and define $C_{j}$ recursively as $C_{j}=\sum_{a(j)}\mathbb{E}\left(
C_{j+1}\left(  a_{j}\right)  |\overline{A}\left(  j-1\right)  ,\overline
{L}(j-1)\right)  $ $j=J-1,...,1$. \ As in practice neither $\overline
{\mathcal{W}}$ nor $\left\{  C_{j}:j\right\}  $ are known and must be
estimated using working models which are sufficiently parsimonious to resolve
the curse of dimensionality, e.g. parametric working models, Robins (2000b)
established that in models 1.1-1.4, $\widehat{\beta}_{dr}$ is a doubly robust
(dr) estimator in the sense that it is consistent and asymptotically normal if
either $\widehat{\overline{\mathcal{W}}}$ or $\left\{  \widehat{C}%
_{j}:j\right\}  $ is consistent but not necessarily both, where
$\widehat{\beta}_{dr}$ solves:
\[
0=\mathbb{P}_{n}\widehat{D}\left(  h,\widehat{\phi}_{opt},\beta\right)
=\mathbb{P}_{n}\widehat{D}_{sm}\left(  h,\beta\right)  /\widehat{\overline
{\mathcal{W}}}\mathcal{+}D_{tp}\left(  \widehat{\phi}_{opt}\right)  ,
\]
with
\[
D_{tp}\left(  \widehat{\phi}_{opt}\right)  =-%
{\displaystyle\sum\limits_{j=0}^{J-1}}
\widehat{C}_{j+1}/\widehat{\overline{\mathcal{W}}}\mathcal{(}j)+\sum
_{a(j)}\widehat{C}_{j+1}/\widehat{\overline{\mathcal{W}}}\mathcal{(}j-1),
\]
and
\[
\widehat{\overline{\mathcal{W}}}\mathcal{(}j)=%
{\displaystyle\prod\limits_{k=0}^{j}}
\widehat{\mathcal{W}}_{k}%
\]
Note that it is likewise possible to construct dr estimators in models
2.1-2.4, however, in Cox MSM 2.1, this requires construction of dr estimators
of \
\begin{align*}
&  \mathbb{E}^{\ast}\left[  h\left(  t,\overline{A},V\right)  \exp\left(
r\left(  \overline{A}\left(  t\right)  ,t,\beta,V\right)  \right)  I(T\geq
t)\right] \\
&  =\mathbb{E}\left[  \exp\left(  r\left(  \overline{A}\left(  t\right)
,t,\beta,V\right)  \right)  I(T\geq t)/\overline{\mathcal{W}}\left(
int\left(  t\right)  \right)  \}\right]  ,
\end{align*}
and
\[
\mathbb{E}^{\ast}\left[  \exp\left(  r\left(  \overline{A}\left(  t\right)
,t,\beta,V\right)  \right)  I(T\geq t)\right]  =\mathbb{E}\left[  \exp\left(
r\left(  \overline{A}\left(  t\right)  ,t,\beta,V\right)  \right)  I(T\geq
t)\overline{\mathcal{W}}\left(  int\left(  t\right)  \right)  \right]  ,
\]
likewise for Model 2.2 which requires a dr estimator of versions of above
quantities conditional on $V$ under $\mathbb{E}^{\ast}$; details are omitted,
however see Tchetgen Tchetgen and Robins (2012) for an illustration in the
case of point exposure. A similar approach applies to model 2.2.

\section{MSM inference with time-varying instrumental variable.}

\subsection{New inverse-probability-of-instrumental-variable weighted
estimators}

\ In this section, we do not make the assumption of sequential randomization
$\left(  \ref{sequential rand}\right)  $ and allow for unmeasured time-varying
covariates $\overline{U}=\left(  U\left(  0\right)  ,....U(J-1)\right)  ,$
such that $U(j)$ is a common cause of $\underline{A}\left(  j\right)  =\left(
A\left(  j\right)  ,...,A\left(  J-1\right)  \right)  $ and $\underline{Y}%
\left(  j+1\right)  =\left(  Y\left(  j+1\right)  ,...,Y\left(  J\right)
\right)  .$ We assume that in addition to $\left(  \overline{L},\overline
{A}\right)  ,$ a binary time-varying instrumental variable $Z(j)$ is observed
just prior to $A(j),j=0,...,J-1;$ further, we assume that had $\overline{U}$
been observed, sequential ignorability would hold. Specifically, we make the
following assumption of latent sequential randomization:
\begin{equation}
\overline{L}_{\overline{a}}\amalg A(j)|\overline{A}(j-1)=\overline
{a}(j-1),\overline{L}(j),\overline{U}(j),\overline{Z}(j)\text{ \ }%
j=1,\ldots,J-1. \label{latent ignorability}%
\end{equation}

However, noting that $\overline{L}_{\overline{a}}\not \amalg A(j)|\overline
{A}(j-1)=\overline{a}(j-1),\overline{L}(j),\overline{Z}(j)$, and given that
$\overline{U}$ is unobserved, the MSM\ is not identified without an additional
assumption. For the purpose of identification, we suppose that $\overline{Z}$
satisfies the following key time-varying IV conditions:

\underline{Assumption (1): IV Relevance}:%
\begin{equation}
Z(j)\not \amalg A(j)|\overline{A}(j-1),\overline{L}(j),\overline{Z}\left(
j-1\right)  \text{ \ }j=1,\ldots,J-1 \label{IV Relevance}%
\end{equation}

\underline{Assumption (2): Exclusion Restriction}:
\begin{equation}
\left(  \overline{L}_{\overline{a}\overline{z}},\overline{U}_{\overline
{a}\overline{z}}\right)  =\left(  \overline{L}_{\overline{a}},\overline
{U}_{\overline{a}}\right)  \text{ a.s.} \label{Exclusion Restriction}%
\end{equation}

\underline{Assumption (3): IV independence }:%

\begin{equation}
\left(  \overline{U}_{\overline{a}},\overline{L}_{\overline{a}}\right)  \amalg
Z(j)|\overline{A}(j-1)=\overline{a}(j-1),\overline{L}(j),\overline
{Z}(j-1)\text{ \ }j=0,\ldots,J-1 \label{IV Independence}%
\end{equation}

\underline{Assumption (4): IV positivity:}%
\[
0<\Pr\left(  Z(j)=1|\overline{A}(j-1),\overline{L}(j),\overline{Z}\left(
j-1\right)  \right)  <1\text{ for \ }j=0,\ldots,J-1
\]

In addition, we suppose the following holds.

\underline{Assumption (5) Independent Compliance Type:}
\begin{align}
&  \mathbb{E}\left[  A(j)|\overline{U}(j),\overline{L}(j),\overline{A}\left(
j-1\right)  ,\overline{Z}(j-1),Z(j)=1\right]  -\mathbb{E}\left[
A(j)|\overline{U}(j),\overline{L}(j),\overline{A}\left(  j-1\right)
,\overline{Z}(j-1),Z(j)=0\right] \label{Independent Compliance type}\\
&  =\delta_{j}\left(  \overline{L}(j),\overline{A}\left(  j-1\right)
,\overline{Z}\left(  j-1\right)  \right) \nonumber
\end{align}
The assumption states that while $\overline{U}(j)$ may confound the causal
effects of $\overline{A}(j),$ no component of $\overline{U}(j)$ interacts with
$Z(j)$ in its additive effects on $A(j).$ A causal interpretation of the
assumption is available if $Z(j)$ $\amalg A_{z(j)}\left(  j\right)
|\overline{U}(j),\overline{L}(j),\overline{A}\left(  j-1\right)  ,\overline
{Z}(j-1)$ in which case $\left(  \ref{Independent Compliance type}\right)  $
implies:
\begin{align}
&  \mathbb{E}\left[  A_{z(j)=1}(j)-A_{z(j)=0}(j)|\overline{U}(j),\overline
{L}(j),\overline{A}\left(  j-1\right)  ,\overline{Z}(j-1)\right] \\
&  =\delta_{j}\left(  \overline{L}(j),\overline{A}\left(  j-1\right)
,\overline{Z}\left(  j-1\right)  \right)  \text{, }j=0,...,J-1.\nonumber
\end{align}
that $\overline{U}(j)$ is conditionally independent of compliance type at time
$j,$ expressed in terms of a person's potential treatment variables under
hypothetical IV interventions $\left\{  A_{z(j)=1}(j),A_{z(j)=0}(j)\right\}
$. This assumption is a longitudinal generalization of a similar assumption
made by Wang and Tchetgen Tchetgen (2018a) and Wang et al (2018b) in the case
of point exposure and IV. \ Below, we will make use of the fact that under our
assumptions, $\left\{  \delta_{j}:j\right\}  $ is empirically identified. Specifically,

\begin{lemma}
Under assumptions (3) and (5), we have that
\[
\delta_{j}(\overline{l}(j),\overline{a}(j-1),\overline{z}(j-1))=\mathbb{E}%
\left[  A(j)|,\overline{l}(j),\overline{a}\left(  j-1\right)  ,\overline
{z}(j-1),Z(j)=1\right]  -\mathbb{E}\left[  A(j)|\overline{l}(j),\overline
{a}\left(  j-1\right)  ,\overline{z}(j-1),Z(j)=0\right]
\]

\end{lemma}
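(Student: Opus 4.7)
The plan is to show that the version of $\delta_j$ defined by assumption (5), which conditions on the unobserved $\overline{U}(j)$, coincides with the analogous contrast constructed from the observables only. The identity $\delta_j(\overline{l}(j),\overline{a}(j-1),\overline{z}(j-1))$ does not depend on $\overline{U}(j)$ by hypothesis, so the strategy is to iterate expectations with respect to $\overline{U}(j)$ and verify that the averaging distribution does not itself depend on $Z(j)$.

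First, I would write, for $z\in\{0,1\}$,
\[
\mathbb{E}\!\left[A(j)\mid \overline{L}(j),\overline{A}(j-1),\overline{Z}(j-1),Z(j)=z\right]
=\mathbb{E}\!\left\{\mathbb{E}\!\left[A(j)\mid \overline{U}(j),\overline{L}(j),\overline{A}(j-1),\overline{Z}(j-1),Z(j)=z\right]\mid \overline{L}(j),\overline{A}(j-1),\overline{Z}(j-1),Z(j)=z\right\}
\]
by the tower property. Subtracting the $z=0$ from the $z=1$ version and substituting assumption (5) inside the inner expectation replaces the integrand by $\delta_j\!\left(\overline{L}(j),\overline{A}(j-1),\overline{Z}(j-1)\right)$, which is constant in $\overline{U}(j)$; the desired identity will follow provided the outer averaging measure, i.e.\ the conditional distribution of $\overline{U}(j)$ given $(\overline{L}(j),\overline{A}(j-1),\overline{Z}(j-1),Z(j)=z)$, does not depend on $z$.

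The key step, and the part that requires the most care, is establishing this last invariance from the counterfactual statement of IV independence (3). Assumption (3) gives $(\overline{U}_{\overline{a}},\overline{L}_{\overline{a}})\amalg Z(j)\mid \overline{A}(j-1)=\overline{a}(j-1),\overline{L}(j),\overline{Z}(j-1)$ for every $\overline{a}$. I would invoke consistency, which pins down $\overline{U}(j)=\overline{U}_{\overline{A}(j-1)}(j)$ (using the exclusion restriction (2) and the no-future-causing-past convention) on the event $\overline{A}(j-1)=\overline{a}(j-1)$. Restricted to this event, the counterfactual independence in (3) therefore transfers to the observed variable, yielding $\overline{U}(j)\amalg Z(j)\mid \overline{L}(j),\overline{A}(j-1)=\overline{a}(j-1),\overline{Z}(j-1)$. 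Since $\overline{a}(j-1)$ was arbitrary in the support of $\overline{A}(j-1)$, the conditional law of $\overline{U}(j)$ given $(\overline{L}(j),\overline{A}(j-1),\overline{Z}(j-1))$ does not depend on $Z(j)$.

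Putting the two ingredients together, the outer expectation of the constant $\delta_j(\overline{L}(j),\overline{A}(j-1),\overline{Z}(j-1))$ under a distribution that is the same for $z=1$ and $z=0$ collapses to $\delta_j$ itself, giving the claimed identity. The only subtlety I anticipate is being explicit about the step that turns the counterfactual independence in assumption (3) into an independence statement about the observed $\overline{U}(j)$; once this is done via consistency, the remainder is routine manipulation of conditional expectations.
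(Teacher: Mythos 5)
Your proposal is correct and follows essentially the same route as the paper's proof: iterate expectations over $\overline{U}(j)$, use IV independence (3) to show the conditional law of $\overline{U}(j)$ does not depend on $Z(j)$, then apply assumption (5) to reduce the integrand to the constant $\delta_j$. Your added care in deriving the observed-data independence $\overline{U}(j)\amalg Z(j)\mid\overline{L}(j),\overline{A}(j-1),\overline{Z}(j-1)$ from the counterfactual statement of (3) via consistency and the exclusion restriction is a step the paper leaves implicit, but it does not change the argument.
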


\begin{proof}%
\begin{align*}
P(A(j) &  =1|\overline{l}(j),\overline{a}(j-1),\overline{z}%
(j-1),Z(j)=1)-P(A(j)=1|\overline{l}(j),\overline{a}(j-1),\overline
{z}(j-1),Z(j)=0)\\
&  =\int P(A(j)=1|\overline{l}(j),\overline{a}(j-1),\overline{z}%
(j-1),Z(j)=1,\overline{u}(j))dF(\overline{u}(j)|\overline{l}(j),\overline
{a}(j-1),\overline{z}(j-1),Z(j)=1)\\
&  -\int P\left(  A(j)=1|\overline{l}(j),\overline{a}(j-1),\overline
{z}(j-1),Z(j)=0,\overline{u}(j)\right)  dF\left(  \overline{u}(j)|\overline
{l}(j),\overline{a}(j-1),\overline{z}(j-1),Z(j)=0\right)  \\
&  =\int P(A(j)=1|\overline{l}(j),\overline{a}(j-1),\overline{z}%
(j-1),Z(j)=1,\overline{u}(j))dF(\overline{u}(j)|\overline{l}(j),\overline
{a}(j-1),\overline{z}(j-1))\\
&  -\int P\left(  A(j)=1|\overline{l}(j),\overline{a}(j-1),\overline
{z}(j-1),Z(j)=0,\overline{u}(j)\right)  dF\left(  \overline{u}(j)|\overline
{l}(j),\overline{a}(j-1),\overline{z}(j-1)\right)  \\
&  =\int\delta_{j}(\overline{l}(j),\overline{a}(j-1),\overline{z}%
(j-1))dF\left(  \overline{u}(j)|\overline{l}(j),\overline{a}(j-1),\overline
{z}(j-1)\right)  \\
&  =\delta_{j}(\overline{l}(j),\overline{a}(j-1),\overline{z}(j-1)).
\end{align*}

\end{proof}

We define the following modified time varying weights:%
\begin{align*}
\overline{\mathcal{W}}^{\dag}\left(  j\right)   &  =%
{\displaystyle\prod\limits_{k=1}^{j}}
\mathcal{W}_{k,1}^{\dag}\mathcal{W}_{k,2}^{\dag}\\
\overline{\mathcal{W}}^{\dag} &  =\overline{\mathcal{W}}^{\dag}\left(
J-1\right)
\end{align*}
where
\[
\mathcal{W}_{k,1}^{\dag}=\frac{f\left(  Z(k)|\overline{L}(k),\overline
{A}(k-1),\overline{Z}(k-1)\right)  \delta_{k}\left(  \overline{L}%
(k),\overline{A}\left(  k-1\right)  ,\overline{Z}\left(  k-1\right)  \right)
}{\left(  -1\right)  ^{1-Z(k)}}%
\]
and
\[
\mathcal{W}_{k,2}^{\dag}=\frac{1}{\left(  -1\right)  ^{1-A(k)}f^{\ast}\left(
A(k)|V,\overline{A}(k-1)\right)  }%
\]

We give our main result.

\begin{lemma}
Suppose that together with consistency, Assumptions (1)-(5) hold. For any
measurable function $G=g(\overline{A},\overline{L}),$
\begin{align*}
\mathbb{E}\left(  g(\overline{A},\overline{L})/\overline{\mathcal{W}}^{\dag
}|V\right)   &  =\sum_{\overline{a}}\mathbb{E}\left\{  g(\overline
{a},\overline{L}_{\overline{a}})|V\right\}
{\displaystyle\prod\limits_{j=0}^{J-1}}
f^{\ast}\left(  a(j)|V,\overline{a}(j-1)\right) \\
&  =\mathbb{E}^{\ast}\left\{  G|V\right\}
\end{align*}

\end{lemma}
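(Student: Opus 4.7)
The plan is to establish the first equality
\begin{equation*}
\mathbb{E}[g(\overline{A},\overline{L})/\overline{\mathcal{W}}^{\dag} \mid V] = \sum_{\overline{a}} \mathbb{E}[g(\overline{a},\overline{L}_{\overline{a}}) \mid V] \prod_{j=0}^{J-1} f^{\ast}(a(j)\mid V,\overline{a}(j-1))
\end{equation*}
by backward induction on $k = J-1,\ldots,0$, peeling off one factor $\mathcal{W}_{k,1}^{\dag}\mathcal{W}_{k,2}^{\dag}$ at a time via the tower property. The second equality is then the g-formula identification of $\mathbb{E}^{\ast}[G\mid V]$ under latent sequential randomization (\ref{latent ignorability}), consistency, and the exclusion restriction, which together recover the counterfactual $\overline{L}_{\overline{a}}$-distribution after replacing the natural treatment mechanism by $f^{\ast}$.

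The workhorse of the induction is a single-layer identity: for each $k$ and any bounded $\phi(A(k))$,
\begin{equation*}
\mathbb{E}\!\left[\frac{\phi(A(k))}{\mathcal{W}_{k,1}^{\dag}\mathcal{W}_{k,2}^{\dag}}\,\bigg|\,\overline{U}(k),\overline{L}(k),\overline{A}(k-1),\overline{Z}(k-1),V\right] = \sum_{a} f^{\ast}(a\mid V,\overline{A}(k-1))\,\phi(a).
\end{equation*}
Substituting the definitions, the reciprocal weight reads $(2Z(k)-1)(2A(k)-1)\,f^{\ast}(A(k)\mid V,\overline{A}(k-1))/[f(Z(k)\mid\ldots)\,\delta_k]$. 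I would iterate the conditional expectation first over $A(k)$ given $Z(k)$ and past, then over $Z(k)$: Assumption (3) with consistency yields $f(Z(k)\mid\overline{U}(k),\overline{L}(k),\overline{A}(k-1),\overline{Z}(k-1)) = f(Z(k)\mid\overline{L}(k),\overline{A}(k-1),\overline{Z}(k-1))$, canceling the $Z$-density in the denominator; Assumption (5) then gives $(2a-1)\sum_{z}(2z-1)P(A(k)=a\mid Z(k)=z,\overline{U}(k),\overline{L}(k),\overline{A}(k-1),\overline{Z}(k-1)) = \delta_k$ for each $a\in\{0,1\}$, canceling the $\delta_k$; and the remaining factor is precisely $\sum_a f^{\ast}(a\mid V,\overline{A}(k-1))\phi(a)$.

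The induction then propagates, for $k = J-1,\ldots,0$, the stronger claim
\begin{align*}
&\mathbb{E}\!\left[\prod_{j=k}^{J-1}\frac{g(\overline{A},\overline{L})}{\mathcal{W}_{j,1}^{\dag}\mathcal{W}_{j,2}^{\dag}}\,\bigg|\,\overline{U}(k),\overline{L}(k),\overline{A}(k-1),\overline{Z}(k-1),V\right]\\
&\quad = \sum_{\overline{a}(k:J-1)}\prod_{j=k}^{J-1} f^{\ast}(a(j)\mid V,\overline{a}(j-1))\,\mathbb{E}\!\left[g(\overline{A}(k-1),\overline{a}(k{:}J{-}1),\overline{L}_{(\overline{A}(k-1),\overline{a}(k:J-1))})\,\bigg|\,\overline{L}(k),\overline{A}(k-1),V\right].
\end{align*}
The step from $k+1$ to $k$ applies the single-layer identity at time $k$ with $\phi$ equal to the inner conditional expectation produced at the $k+1$ hypothesis. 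For this to close, the inner expectation must be measurable with respect to $\sigma(A(k),\overline{L}(k),\overline{A}(k-1),V)$, and this is where the exclusion restriction (Assumption 2), IV independence (Assumption 3), and latent sequential randomization (\ref{latent ignorability}) are jointly invoked: they let one replace the conditional law of $L(k+1)$ given $(\overline{A}(k),\overline{L}(k),\overline{U}(k),\overline{Z}(k),V)$ by the counterfactual law of $L_{\overline{A}(k)}(k+1)$ given $(\overline{A}(k),\overline{L}(k),V)$, free of $\overline{U}(k)$ and $\overline{Z}(k)$. Setting $k=0$ and using $V\in L(0)$ yields the displayed main identity.

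The principal obstacle is the sigma-algebra bookkeeping at each inductive step: the single-layer identity operates on integrands measurable in $(A(k),\overline{L}(k),\overline{A}(k-1),V)$, whereas the induction hypothesis naturally produces objects measurable in the richer filtration $\sigma(\overline{U}(k+1),\overline{L}(k+1),\overline{A}(k),\overline{Z}(k),V)$. Every step therefore requires a local replay of the g-formula argument at time $k$, invoking Assumptions (2)--(3) and (\ref{latent ignorability}) in concert to convert the inner observed-data conditional expectation into a counterfactual $\overline{L}_{\overline{a}}$-expression that the single-layer identity can absorb as its $\phi$.
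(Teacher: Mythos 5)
Your overall architecture coincides with the paper's: a backward induction that peels off one factor $\mathcal{W}_{k,1}^{\dag}\mathcal{W}_{k,2}^{\dag}$ per step, and your ``single-layer identity'' is exactly the paper's central computation --- Assumption (3) (with consistency) gives $f(Z(k)\mid \overline{U}(k),\overline{L}(k),\overline{A}(k-1),\overline{Z}(k-1))=f(Z(k)\mid \overline{L}(k),\overline{A}(k-1),\overline{Z}(k-1))$, cancelling the $Z$-density, and Assumption (5) turns $\sum_{z}(-1)^{1-z}\,\PP[A(k)=a\mid Z(k)=z,\overline{U}(k),\ldots]$ into $(-1)^{1-a}\delta_k$, cancelling the $\delta_k$ in $\mathcal{W}_{k,1}^{\dag}$ and leaving $\sum_a f^{\ast}(a\mid V,\overline{A}(k-1))\phi(a)$. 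That computation is correct and is the heart of the paper's proof.

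The gap is in the induction hypothesis you propose to carry. You assert that the inner expectation produced at stage $k$ is $\EE[g(\cdot)\mid \overline{L}(k),\overline{A}(k-1),V]$, i.e.\ ``free of $\overline{U}(k)$ and $\overline{Z}(k)$.'' None of the assumptions delivers this, and it is false in general: latent sequential randomization $(\ref{latent ignorability})$ lets you drop the \emph{current} $A(k)$ from the conditioning set, and Assumption (3) lets you drop the \emph{current} $Z(k)$, but nothing removes $\overline{U}(k)$ or $\overline{Z}(k-1)$. Indeed $U(k)$ is by construction a common cause of $\underline{A}(k)$ and $\underline{Y}(k+1)$, so the law of $\overline{L}_{\overline{a}}$ given the observed past still depends on $\overline{U}(k)$; already for $J=2$, $Y_{\overline{a}}(2)$ is not independent of $U(1)$ given $(\overline{L}(1),A(0),V)$. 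Since the intermediate claim is false, the induction as you state it does not close: the object handed to the single-layer identity at the next stage is not the one your hypothesis promises. The repair is exactly what the paper does --- keep the inner counterfactual expectation conditioned on the full latent history $(\overline{L}(k),\overline{U}(k),\overline{A}(k-1),\overline{Z}(k-1))$ at every stage (this is still measurable with respect to the conditioning sigma-algebra used in the weight-cancellation step, which is all that is needed), and marginalize over $(U(0),L(0))$ given $V$ only at the final stage $k=0$. With that single change your argument becomes the paper's proof.
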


Note that the above Lemma continues to hold under the less stringent latent
SRA $Y_{\overline{a}}\amalg A(j)|\overline{A}(j-1)=\overline{a}(j-1),\overline
{L}(j),\overline{U}\left(  j\right)  $ \ $j=1,\ldots,J-1,$ if $g(\overline
{A},\overline{L})$ only depends on $\overline{L}$ through $Y=Y(J).$ The Lemma
motivates the following simple weighted estimating equation of $\beta_{0}$ in
models 1 and 2.\ Suppose that one has obtained $n^{1/2}$-consistent estimators
$\widehat{f}\left(  Z(k)|\overline{L}(k),\overline{A}(k-1),\overline
{Z}(k-1)\right)  $ and $\widehat{\delta}_{k}\left(  \overline{L}%
(k),\overline{A}\left(  k-1\right)  ,\overline{Z}\left(  k-1\right)  \right)
,$ $k=0,...,J-1$ and let $\widehat{\overline{\mathcal{W}}}^{\dag}$ denote the
corresponding estimated weight. Then under the assumptions given in the lemma
above, we have that $\mathbb{E}\left\{  D_{sm}\left(  h,\beta_{0}\right)
/\overline{\mathcal{W}}^{\#}\right\}  =\mathbb{E}^{\ast}\left\{  D_{sm}\left(
h,\beta_{0}\right)  \right\}  =0$ where $\left\{  D_{sm}\left(  h,\beta
\right)  :h\right\}  $ is the set of unbiased estimating functions of
$\beta_{0}$ corresponding to one of models 1-2 under (ii*). Then, assuming
that $\mathbb{E}\left\{  \nabla_{\beta}D_{sm}\left(  h,\beta\right)
|_{\beta_{0}}/\overline{\mathcal{W}}^{\dag}\right\}  $ is invertible, the
above lemma motivates the following simple weighted estimating equation of the
RAL estimator $\widehat{\beta}_{ipw}:$
\[
o_{p}\left(  n^{-1/2}\right)  =\mathbb{P}_{n}\widehat{D}_{sm}\left(
h,\widehat{\beta}_{ipw}\right)  /\widehat{\overline{\mathcal{W}}}^{\dag}.
\]
The asymptotic distribution of $\widehat{\beta}_{ipw}$ follows from a standard
Taylor expansion and is omitted, the nonparametric bootstrap may also be used
for inference. Note that for estimating models 2.1-2.4 all unknown
expectations must be estimated with a corresponding weighted expectation as
outlined in the previous Section, however now using the modified weights
$\widehat{\overline{\mathcal{W}}}^{\dag}(int(t))$.

\subsection{New multiply robust estimators}

Next, we describe multiply robust estimators of $\beta_{0}$ which is motivated
by considering the set of influence functions associated with RAL estimators
of $\beta_{0}$ in the semiparametric model $\mathcal{M}_{IV\text{ }}$ defined
only by the MSM, the consistency assumption and assumptions (1)-(5).

\begin{lemma}
All RAL estimators $\widehat{\beta}_{np}=\widehat{\beta}_{np}\left(  h\right)
$ of $\beta_{0}$ under $\mathcal{M}_{IV\text{ }}$ are solutions to an
estimating equation of the form
\begin{align*}
o_{p}\left(  n^{-1/2}\right)   &  =\mathbb{P}_{n}\left[  D^{\dag}\left(
h,\widehat{\beta}_{np}\left(  h\right)  \right)  \right] \\
&  =\mathbb{P}_{n}\left[  \frac{D_{sm}\left(  h,\widehat{\beta}_{np}\left(
h\right)  \right)  }{\overline{\mathcal{W}}^{\dag}}\right] \\
&  -\mathbb{P}_{n}\left[  \sum_{j=0}^{J-1}\frac{1}{\overline{\mathcal{W}%
}^{\dag}(j-1)}\left\{  \frac{\left(  -1\right)  ^{1-Z(j)}\Psi_{j}\left(
\widehat{\beta}_{np}\left(  h\right)  \right)  }{f\left(  Z(j)|\overline
{L}(j),\overline{A}(j-1),\overline{Z}(j-1)\right)  }-\widetilde{\Psi}%
_{j}\left(  \widehat{\beta}_{np}\left(  h\right)  \right)  -\frac{\epsilon
_{j}\widetilde{\Psi}_{j}\left(  \widehat{\beta}_{np}\left(  h\right)  \right)
}{\mathcal{W}_{1}^{\dag}(j)}\right\}  \right]
\end{align*}
where%
\[
\Psi_{J-1}\left(  \beta\right)  =\mathbb{E}\left[  \frac{D_{sm}\left(
h,\beta\right)  }{\mathcal{W}_{2}^{\dag}(J-1)\Delta_{J-1}}|\overline{A}\left(
J-2\right)  ,\overline{L}\left(  J-1\right)  ,\overline{Z}\left(  J-1\right)
\right]  ,
\]
for $j=J-2,...,0,$%
\[
\Psi_{j}\left(  \beta\right)  =\mathbb{E}\left[  \frac{\widetilde{\Psi}%
_{j+1}\left(  \beta\right)  }{\mathcal{W}_{2}^{\dag}(j)\Delta_{j}}%
|\overline{A}\left(  j-1\right)  ,\overline{L}\left(  j\right)  ,\overline
{Z}\left(  j\right)  \right]  ,
\]
for $j=J-1,...,0,$
\[
\widetilde{\Psi}_{j}\left(  \beta\right)  =\sum_{z(j)}(-1)^{1-z(j)}\Psi
_{j}\left(  z(j);\beta\right)  ,
\]%
\begin{align*}
\Delta_{j}  &  =\delta_{j}\left(  \overline{A}(j-1),\overline{Z}\left(
j-1\right)  ,\overline{L}\left(  j\right)  \right) \\
\epsilon_{j}  &  =A(j)-\mathbb{E}\left(  A(j)|\overline{A}\left(  j-1\right)
,\overline{L}\left(  j\right)  ,\overline{Z}\left(  j\right)  \right)
\end{align*}

\end{lemma}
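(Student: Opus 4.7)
The plan is to derive the stated characterization as a direct application of semiparametric efficiency theory: starting from the unbiased IPW estimating function $U(\beta) = D_{sm}(h,\beta)/\overline{\mathcal{W}}^\dag$ guaranteed by Lemma~2, I identify the nuisance tangent space $\Lambda$ of $\mathcal{M}_{IV}$ and compute the orthogonal projection $\Pi(U(\beta_0)\mid\Lambda)$. Every RAL estimator's influence function is of the form $-J^{-1}[U(\beta_0) - \Pi(U(\beta_0)\mid\Lambda)]$, so the three-term correction inside the braces must arise by peeling off components of the nuisance projection sequentially in time.

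First I would factorize the full-data likelihood (with latent $\overline{U}$) in time order as a product of $f(L(j),U(j)\mid\text{past})$, $f(Z(j)\mid\overline{L}(j),\overline{A}(j-1),\overline{Z}(j-1))$ (which is $\overline{U}$-free by Assumption~(3)), and $f(A(j)\mid\overline{L}(j),\overline{U}(j),\overline{A}(j-1),\overline{Z}(j))$. This induces a direct-sum decomposition $\Lambda = \bigoplus_{j=0}^{J-1}(\Lambda_{LU,j}\oplus\Lambda_{Z,j}\oplus\Lambda_{A,j})$ via the sequential-conditioning orthogonality of scores. The confounder and IV tangent spaces are standard; the novelty is that Assumption~(5) restricts $\Lambda_{A,j}$ by forcing $\mathbb{E}[A(j)\mid\overline{U}(j),\overline{L}(j),\overline{A}(j-1),\overline{Z}(j)]$ to be additively separable in $Z(j)$ and $\overline{U}(j)$.

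Second, I would peel the projection backward from $j=J-1$. Successive iterated conditional expectations produce the nested quantities $\Psi_j$; Lemma~1 together with Lemma~2 let me rewrite expectations over the latent $\overline{U}$ entirely in terms of observables via $\delta_j$ and the counterfactual $f^*$-weighted factor $\mathcal{W}_2^\dag\Delta_j$. The projection onto $\Lambda_{Z,j}$ (with binary $Z(j)$) lies in the one-dimensional span of $(-1)^{1-Z(j)}/f(Z(j)\mid\cdot) - \sum_{z}(-1)^{1-z}$, giving the first two inner-brace terms $(-1)^{1-Z(j)}\Psi_j/f(Z(j)\mid\cdot) - \widetilde{\Psi}_j$. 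The projection onto the restricted $\Lambda_{A,j}$ then supplies the residual piece $\epsilon_j\widetilde{\Psi}_j/\mathcal{W}_1^\dag(j)$, with the $\mathcal{W}_1^\dag(j)$ denominator tracking the compliance rescaling that makes the $\overline{U}$-integration collapse under Assumption~(5).

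The main obstacle is characterizing $(\Lambda_{A,j})^\perp$ under Assumption~(5) as seen from the observed-data Hilbert space. Without Assumption~(5), $\Lambda_{A,j}$ spans all mean-zero functions of $A(j)$ given $(\overline{U}(j),\overline{L}(j),\overline{A}(j-1),\overline{Z}(j))$, and its complement contains no nontrivial observed-data direction; Assumption~(5) strictly shrinks $\Lambda_{A,j}$, and the new direction gained in its complement is a function of $\epsilon_j$ multiplied by a free function of $(\overline{A}(j-1),\overline{L}(j),\overline{Z}(j))$. Verifying that the specific choice $\widetilde{\Psi}_j/\mathcal{W}_1^\dag(j)$ is exactly the piece needed to complete the orthogonal projection requires carefully tracing how $\overline{U}$-integrations collapse via Assumption~(5) and Lemma~1's identification of $\delta_j$; once this is in place, the backward peeling over $j$ and the standard orthogonal-decomposition arguments of semiparametric theory close the characterization.
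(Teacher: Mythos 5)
Your strategy --- decompose the nuisance tangent space $\Lambda$ of $\mathcal{M}_{IV}$ and form $U-\Pi(U\mid\Lambda)$ --- would, if completed, land on the same object: the two leading brace terms are indeed a mean-zero function of $Z(j)$ given $(\overline{A}(j-1),\overline{L}(j),\overline{Z}(j-1))$ and the third is a mean-zero function of $A(j)$ given $(\overline{A}(j-1),\overline{L}(j),\overline{Z}(j))$, so the correction does live in the conditional-score spaces you describe and $D^{\dag}$ is orthogonal to them. But the paper takes a different and, crucially, simpler route that never characterizes $\Lambda$: it starts from the observed-data identifying equation $\mathbb{E}\{D_{sm}(h,\beta_0)/\overline{\mathcal{W}}^{\dag}\}=0$ supplied by the IPW lemma, takes an arbitrary regular parametric submodel $F_t$ of the observed-data law, and differentiates $\mathbb{E}_t\{D_{sm}(h,\beta(F_t))/\overline{\mathcal{W}}_t^{\dag}\}=0$ in $t$. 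The product rule applied to $\overline{\mathcal{W}}_t^{\dag}$ produces exactly the score-of-$f(Z(k)\mid\cdot)$ contributions and, through $\nabla_t\delta_{k,t}$, the score-of-$f(A(k)\mid\cdot)$ contributions; iterated conditioning collapses these into the backward $\Psi_j$ recursion, and collecting everything against the score $\mathbb{S}(O)$ exhibits $D^{\dag}(h,\beta_0)$ as a gradient of the functional $\beta(F)$. Every quantity in that computation ($f(Z(k)\mid\cdot)$, $\delta_k$, $\mathbb{E}(A(k)\mid\cdot)$) is a functional of the observed data alone.

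The gap in your plan sits exactly where you flag it, and it is not a technicality. You propose to read off $\Lambda$ from the factorization of the \emph{latent}-data likelihood, with Assumption (5) shrinking $\Lambda_{A,j}$. But the estimation problem lives in the observed-data Hilbert space: the relevant nuisance tangent space is the closure of observed-data scores induced by perturbations of the latent law, and Assumptions (1)--(5) constrain that latent law in ways whose observed-data image is not transparent (the paper in fact asserts the sole observed-data restriction is the MSM itself). Pinning down $(\Lambda_{A,j})^{\perp}$ as seen from the observed data, verifying that $\epsilon_j\widetilde{\Psi}_j/\mathcal{W}_1^{\dag}(j)$ is the exact projection residual after the $\overline{U}$-integrations collapse, and showing that ranging over $h$ exhausts \emph{all} influence functions are each substantive steps that your proposal asserts rather than carries out. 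Until the induced observed-data tangent space is identified, the projection calculation cannot be closed; the pathwise-differentiation argument the paper uses bypasses the issue entirely and is the natural repair.
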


The estimator $\widehat{\beta}_{np}\left(  h\right)  $ is not feasible in
practice because it depends on the unknown quantities $\Psi_{j}\left(
\beta\right)  $, $E\left(  A(j)|\overline{A}\left(  j-1\right)  ,\overline
{L}\left(  j\right)  ,\overline{Z}\left(  j\right)  \right)  $ and $f\left(
Z(j)|\overline{L}(j),\overline{A}(j-1),\overline{Z}(j-1)\right)  .$ In
practice, these unknown quantities can be estimated from the observed data
under parametric working models. Let $\Gamma_{j}^{(1)}\left(  \beta\right)
=\widetilde{\Psi}_{j}\left(  \beta\right)  ;\Gamma_{j}^{0}\left(
\beta\right)  =\Psi_{j}\left(  Z\left(  j\right)  =0;\beta\right)  $ with
corresponding estimators $\widehat{\Gamma}_{j}^{(1)}\left(  \beta\right)  $
and $\widehat{\Gamma}_{j}^{(0)}\left(  \beta\right)  $. Likewise, let
\[
\widehat{E}\left(  A(j)|\overline{A}\left(  j-1\right)  ,\overline{L}\left(
j\right)  ,\overline{Z}\left(  j\right)  \right)  =\widehat{\Delta}%
_{j}Z(j)+\widehat{E}\left(  A(j)|\overline{A}\left(  j-1\right)  ,\overline
{L}\left(  j\right)  ,\overline{Z}\left(  j-1\right)  ,Z(j)=0\right)  ,
\]
$\widehat{f}\left(  Z(j)|\overline{L}(j),\overline{A}(j-1),\overline
{Z}(j-1)\right)  $ and $\widehat{\Delta}_{j}$ also denote estimators of
\[
E\left(  A(j)|\overline{A}\left(  j-1\right)  ,\overline{L}\left(  j\right)
,\overline{Z}\left(  j\right)  \right)  =\Delta_{j}Z(j)+E\left(
A(j)|\overline{A}\left(  j-1\right)  ,\overline{L}\left(  j\right)
,\overline{Z}\left(  j-1\right)  ,Z(j)=0\right)  ,
\]
$f\left(  Z(j)|\overline{L}(j),\overline{A}(j-1),\overline{Z}(j-1)\right)  $
and $\Delta_{j}.$ We show in the appendix that the estimator $\widehat{\beta
}_{mr}\left(  h\right)  $ that solves $o_{p}\left(  n^{-1/2}\right)
=\mathbb{P}_{n}\left[  \widehat{D}^{\dag}\left(  h,\widehat{\beta}_{mr}\left(
h\right)  \right)  \right]  $ $\ $where $\widehat{D}^{\dag}$ replaces all
unknown quantities with a corresponding estimator, is multiply robust in the
sense that it is CAN\ if either one but not necessarily all three of the
following conditions hold: (i) $\widehat{\overline{\mathcal{W}}}^{\dag}$ is
consistent for $\overline{\mathcal{W}}^{\dag},$ or (ii) $\widehat{f}\left(
Z(j)|\overline{L}(j),\overline{A}(j-1),\overline{Z}(j-1)\right)  $ is
consistent and $\widehat{\Gamma}_{j}^{1}\left(  \beta\right)  $ is consistent
for all $j\leq J$; or (iii) $\widehat{\Gamma}_{j}^{1}\left(  \beta\right)  $,
$\widehat{\Gamma}_{j}^{0}\left(  \beta\right)  $ and $\widehat{E}\left(
A(j)|\overline{A}\left(  j-1\right)  ,\overline{L}\left(  j\right)
,\overline{Z}\left(  j\right)  \right)  $ are consistent. In Models 2.1-2.3,
the result requires also replacing unknown expectations with corresponding
multiply robust estimators analogous to the estimator given above, details are
omitted. This result effectively generalizes that of Wang and Tchetgen
Tchetgen (2017) to the time-varying setting.

\section{Semiparametric Efficiency}

The semiparametric efficiency bound in a semiparametric model is the inverse
of the variance of the efficient score $S_{eff,\beta_{0}}$ for the model. By
Theorem 5.3 of Newey and McFadden (1993), the efficient score $S_{eff,\beta
_{0}}=$ $D^{\dag}\left(  h_{eff},\beta_{0}\right)  $ in model $\mathcal{M}%
_{IV\text{ }}$ is uniquely characterized by the requirement that for all
$D^{\dag}\left(  h,\beta_{0}\right)  :$%
\begin{equation}
\mathbb{E}\left\{  D^{\dag}\left(  h,\beta_{0}\right)  D^{\dag}\left(
h_{eff},\beta_{0}\right)  ^{T}\right\}  =-\mathbb{E}\left\{  \nabla_{\beta
^{T}}D^{\dag}\left(  h,\beta\right)  |_{\beta_{0}}\right\}  \label{efficient}%
\end{equation}
In order to illustrate the result, consider MSM 1.1$.$ Note that because
$\overline{A}$ is discrete valued with finite support, let $\Xi=\varepsilon
\left(  \beta_{0}\right)  \times\left(  1\left(  \overline{A}=\overline{a}%
_{1}\right)  ,....,1\left(  \overline{A}=\overline{a}_{C}\right)  \right)
^{T}$ where $\left\{  \overline{a}_{c}:c\right\}  $ are the $2^{J}$ possible
values of $\overline{a},$ also let $\mathbf{H=h}\left(  V\right)  $ denote a
$p\times2^{J}$ function of $V.$ The set of influence functions of $\beta_{0}$
under $\mathcal{M}_{IV\text{ }}$ can be written $\left\{  D^{\dag}\left(
\mathbf{h}\right)  :\mathbf{h}\right\}  $ where $D^{\dag}\left(
\mathbf{h}\right)  =\mathbf{H}\widetilde{\Xi}$, and%
\begin{align*}
&  \text{ }\widetilde{\Xi}=\frac{\Xi}{\overline{\mathcal{W}}^{\dag}}%
-\sum_{j=0}^{J-1}\frac{1}{\overline{\mathcal{W}}^{\dag}(j-1)}\left\{
\frac{\left(  -1\right)  ^{1-Z(j)}\mathbb{E}\left[  \frac{\Xi}{\mathcal{W}%
_{2}^{\dag}(J-1)\Delta_{J-1}}|\overline{A}\left(  j-1\right)  ,\overline
{L}\left(  j\right)  ,\overline{Z}\left(  j\right)  \right]  }{f\left(
Z(j)|\overline{L}(j),\overline{A}(j-1),\overline{Z}(j-1)\right)  }\right. \\
&  -\sum_{z(j)}\mathbb{E}\left[  \frac{\Xi}{\mathcal{W}_{2}^{\dag}%
(J-1)\Delta_{J-1}}|\overline{A}\left(  j-1\right)  ,\overline{L}\left(
j\right)  ,\overline{Z}\left(  j-1\right)  ,z(j)\right] \\
&  -\frac{\epsilon_{j}\sum_{z(j)}\mathbb{E}\left[  \frac{\Xi}{\mathcal{W}%
_{2}^{\dag}(J-1)\Delta_{J-1}}|\overline{A}\left(  j-1\right)  ,\overline
{L}\left(  j\right)  ,\overline{Z}\left(  j-1\right)  ,z(j)\right]
}{\mathcal{W}_{1}^{\dag}(j)}%
\end{align*}
A straightforward application of equation $\left(  \ref{efficient}\right)  $
gives the efficient influence function: $D_{eff}^{\dag}\left(  \mathbf{h}%
\right)  =\mathbf{H}_{eff}\widetilde{\Xi}$ where $\mathbf{H}_{eff}%
=\mathbb{E}\left\{  \nabla_{\beta^{T}}\widetilde{\Xi}\left(  \beta\right)
|_{\beta_{0}}|V\right\}  \mathbb{E}\left\{  \widetilde{\Xi}\widetilde{\Xi}%
^{T}|V\right\}  ^{-1}.$ The efficient influence function for other MSMs
considered in this paper can likewise be obtained by straightforward
application of equation $\left(  \ref{efficient}\right)  $ although details
are omitted.

\section{Final Remarks}

This technical report provides identification conditions for MSMs using a
time-varying instrumental variable in the case of time-varying endogenous
binary treatment, a long-standing problem in the causal inference literature.
The case of polytomous or continuous treatments will be discussed elsewhere.
The paper also provides weighted estimating equations that are easy to
implement, as well as multiply robust estimating equations which are
substantially more computationally intensive. Evaluation of final sample
performance and application of these methods is currently underway and will be
published elsewhere.

\begin{center}
{\Large APPENDIX}
\end{center}

\begin{proof}
[Proof of Lemma 1]:
The proof is by induction backwards on the time index $j$. That is, supposing
for some $j, 1\le j\le J-1,$ we have established
\begin{align*}
\mathbb{E}\left(  \frac{g(\overline{A},\overline{L})}{\overline{W}^{\dagger}%
}\right)   &  =\mathbb{E}\left\{  (\overline{W}^{\dagger}_{j})^{-1}%
\sum_{\underline{a_{j+1}}} \mathbb{E}[g(\underline{A}(j+1)=\underline{a_{j+1}%
},\underline{L}_{\underline{A}(j+1)=\underline{a_{j+1}}}(j+2),\overline
{A}(j),\overline{L}(j+1)\mid\overline{LU}(j+1),\overline{AZ}(j)] \right. \\
&  \left.  \quad\times\prod_{k=j+1}^{J-1}f^{\ast}(a_{k}\mid\overline
{A}(j),A_{j+1}=a_{j+1},\ldots,A_{k-1}=a_{k-1})
\vphantom{\sum_{\ul{a_{j+1}}} \EE}\right\}  ,
\end{align*}
we establish the same with $j$ replaced by $j-1$ throughout. In the preceding
display, the notation $A_{j+1}=a_{j+1},\ldots,A_{k-1}=a_{k-1}$ is to be read
as an empty list when $j+1>k-1$; the product $\prod_{k=j+1}^{J-1}(\ldots)$ is
$1$ for $j+1>J-1$; and similarly $L_{a}(j)=L_{a}(J)$ for $j> J$. The summation
$\sum_{\underline{a_{j+1}}}$ ranges over all treatment regimes $a_{j+1}%
,a_{j+2},\ldots,a_{J-1}\in\mathcal{A}^{J-j-1}$; for $j+1>J-1$, the sum
$\sum_{\underline{a_{j+1}}}\mathbb{E}[g(\underline{A}(j+1)=\underline{a_{j+1}%
},\underline{L}_{\underline{A}(j+1)=\underline{a_{j+1}}}(j+2),\overline
{A}(j),\overline{L}(j+1)\mid\overline{LU}(j+1),\overline{AZ}(j)]$ is just
$\mathbb{E}[g(\overline{A}(J-1),\overline{L}(J)\mid\overline{LU}%
(J),\overline{AZ}(J-1)]$. Conditioning on $V$ is assumed throughout, though
suppressed. With these notation conventions, the $j=J-1$ case holds trivially.
Conditioning with respect to $\overline{LU}(j), \overline{AZ}(j-1),$ the rhs
is
\begin{align*}
&  =\mathbb{E}\left\{  (\overline{W}^{\dagger}_{j-1})^{-1}\sum
_{\underline{a_{j+1}}} \mathbb{E}\left\{  (W^{\dagger}_{j})^{-1}%
\mathbb{E}[g(\ldots)\mid\ldots]\prod f^{\ast}(\ldots)\middle| \overline
{LU}(j),\overline{AZ}(j-1)\right\}  \right\}  .
\end{align*}
Considering a single term of the sum,
\begin{align}
&  \mathbb{E}\left\{  \vphantom{\sum_{\ul{dd}}}(W^{\dagger}_{j})^{-1}%
\mathbb{E}[g(\underline{A}(j+1)=\underline{a_{j+1}},\underline{L}%
_{\underline{A}(j+1)=\underline{a_{j+1}}}(j+2),\overline{A}(j),\overline
{L}(j+1)\mid\overline{LU}(j+1),\overline{AZ}(j)] \right. \nonumber\\
&  \left.  \quad\times\prod_{k=j+1}^{J-1}f^{\ast}(a_{k}\mid\overline
{A}(j),A_{j+1}=a_{j+1},\ldots,A_{k-1}=a_{k-1}) \middle| \overline
{LU}(j),\overline{AZ}(j-1)\right\} \nonumber\\
&  =\mathbb{E}\left\{  \vphantom{\sum_{\ul{dd}}}(W^{\dagger}_{j}%
)^{-1}\mathbb{E}[g(\underline{A}(j+1)=\underline{a_{j+1}},\underline{L}%
_{\underline{A}(j+1)=\underline{a_{j+1}}}(j+2),\overline{A}(j),\overline
{L}(j+1)\mid\overline{LUAZ}(j)] \right. \nonumber\\
&  \left.  \quad\times\prod_{k=j+1}^{J-1}f^{\ast}(a_{k}\mid\overline
{A}(j),A_{j+1}=a_{j+1},\ldots,A_{k-1}=a_{k-1}) \middle| \overline
{LU}(j),\overline{AZ}(j-1)\right\} \nonumber\\
&  =\mathbb{E}\left\{  \vphantom{\sum_{\ul{dd}}}(W^{\dagger}_{1,j}%
)^{-1}(-1)^{1-A(j)}\mathbb{E}[g(\underline{A}(j+1)=\underline{a_{j+1}%
},\underline{L}_{\underline{A}(j+1)=\underline{a_{j+1}}}(j+2),\overline
{A}(j),\overline{L}(j+1)\mid\overline{LUAZ}(j)] \right. \nonumber\\
&  \left.  \quad\times f^{\ast}(A(j)\mid\overline{A}(j-1))\prod_{k=j+1}%
^{J-1}f^{\ast}(a_{k}\mid\overline{A}(j),A_{j+1}=a_{j+1},\ldots,A_{k-1}%
=a_{k-1}) \middle| \overline{LU}(j),\overline{AZ}(j-1)\right\} \nonumber\\
&  =\mathbb{E}\left\{  \sum_{a_{j}\in\{0,1\}}\frac{\mathbbm{1}\{A(j)=a_{j}%
\}(-1)^{1-a_{j}}}{W^{\dagger}_{1,j}}\vphantom{\sum_{\ul{dd}}}\mathbb{E}%
[g(\underline{A}(j)=\underline{a_{j}},\underline{L}_{\underline{A}%
(j)=\underline{a_{j}}}(j+1),\overline{A}(j-1),\overline{L}(j)\mid
\overline{LUZ}(j),\overline{A}(j-1),A(j)=a_{j}] \right. \nonumber\\
&  \left.  \quad\times\prod_{k=j}^{J-1}f^{\ast}(a_{k}\mid\overline
{A}(j-1),A_{j}=a_{j},\ldots,A_{k-1}=a_{k-1}) \middle| \overline{LU}%
(j),\overline{AZ}(j-1)\right\}  . \label{lemma:1 eqn:1}%
\end{align}
By SRA, $L_{\underline{A}(j)=\underline{a_{j}}}\amalg A(j) \mid\overline
{A}(j-1),\overline{ZLU}(j),$
\begin{align*}
&  \mathbb{E}[g(\underline{A}(j)=\underline{a_{j}},\underline{L}%
_{\underline{A}(j)=\underline{a_{j}}}(j+1),\overline{A}(j-1),\overline
{L}(j)\mid\overline{LUZ}(j),\overline{A}(j-1),A(j)=a_{j}]\\
&  =\mathbb{E}[g(\underline{A}(j)=\underline{a_{j}},\underline{L}%
_{\underline{A}(j)=\underline{a_{j}}}(j+1),\overline{A}(j-1),\overline
{L}(j)\mid\overline{LUZ}(j),\overline{A}(j-1)],
\end{align*}
and by IV independence, $L_{\underline{A}(j)=\underline{a_{j}}}\amalg
Z(j)\mid\overline{AZ}(j-1),\overline{LU}(j),$
\begin{align*}
&  \mathbb{E}[g(\underline{A}(j)=\underline{a_{j}},\underline{L}%
_{\underline{A}(j)=\underline{a_{j}}}(j+1),\overline{A}(j-1),\overline
{L}(j)\mid\overline{LUZ}(j),\overline{A}(j-1)]\\
&  =\mathbb{E}[g(\underline{A}(j)=\underline{a_{j}},\underline{L}%
_{\underline{A}(j)=\underline{a_{j}}}(j+1),\overline{A}(j-1),\overline
{L}(j)\mid\overline{LU}(j),\overline{AZ}(j-1)].
\end{align*}

so that (\ref{lemma:1 eqn:1}) is
\begin{align}
&  =\sum_{a_{j}\in\{0,1\}}(-1)^{1-a_{j}}\mathbb{E}[g(\underline{A}%
(j)=\underline{a_{j}},\underline{L}_{\underline{A}(j)=\underline{a_{j}}%
}(j+1),\overline{A}(j-1),\overline{L}(j)\mid\overline{LU}(j),\overline
{AZ}(j-1)]\nonumber\\
&  \quad\times\prod_{k=j}^{J-1}f^{\ast}(a_{k}\mid\overline{A}(j-1),A_{j}%
=a_{j},\ldots,A_{k-1}=a_{k-1}) \mathbb{E}\left\{  (W^{\dagger}_{1,j}%
)^{-1}\mathbbm{1}\left\{  A(j)=a_{j}\right\}  \middle| \overline
{LU}(j),\overline{AZ}(j-1)\right\} \nonumber\\
&  =\sum_{a_{j}\in\{0,1\}}(-1)^{1-a_{j}}\mathbb{E}[g(\underline{A}%
(j)=\underline{a_{j}},\underline{L}_{\underline{A}(j)=\underline{a_{j}}%
}(j+1),\overline{A}(j-1),\overline{L}(j)\mid\overline{LU}(j),\overline
{AZ}(j-1)]\nonumber\\
&  \quad\times\frac{\prod_{k=j}^{J-1}f^{\ast}(a_{k}\mid\overline{A}%
(j-1),A_{j}=a_{j},\ldots,A_{k-1}=a_{k-1})}{\delta_{j}(\overline{L}%
(j),\overline{AZ}(j-1))} \mathbb{E}\left\{  \frac{(-1)^{1-Z(j)}%
\mathbbm{1}\left\{  A(j)=a_{j}\right\}  }{f(Z(j)\mid\overline{L}%
(j),\overline{AZ}(j-1))} \middle| \overline{LU}(j),\overline{AZ}(j-1)\right\}
. \label{lemma:1 eqn:2}%
\end{align}
By another application of IV independence,
\begin{align*}
&  \mathbb{E}\left\{  \frac{(-1)^{1-Z(j)}\mathbbm{1}\left\{  A(j)=a_{j}%
\right\}  }{f(Z(j)\mid\overline{L}(j),\overline{AZ}(j-1))} \middle|
\overline{LU}(j),\overline{AZ}(j-1)\right\} \\
&  =\mathbb{E}\left\{  \mathbbm{1}\{A(j)=a_{j}\}\sum_{z_{j}\in\{0,1\}}%
\frac{(-1)^{1-z_{j}}\mathbbm{1}\{Z(j)=z_{j}\}}{f(z_{j}\mid\overline
{L}(j),\overline{AZ}(j-1))} \middle| \overline{LU}(j),\overline{AZ}%
(j-1)\right\} \\
&  =\mathbb{E}\left\{  \mathbb{P}[A(j)=a_{j}\mid\overline{LUZ}(j),\overline
{A}(j-1)]\sum_{z_{j}\in\{0,1\}}\frac{(-1)^{1-z_{j}}\mathbbm{1}\{Z(j)=z_{j}%
\}}{f(z_{j}\mid\overline{L}(j),\overline{AZ}(j-1))} \middle| \overline
{LU}(j),\overline{AZ}(j-1)\right\} \\
&  = \sum_{z_{j}\in\{0,1\}}(-1)^{1-z_{j}}\mathbb{P}[A(j)=a_{j}\mid
\overline{LU}(j),\overline{AZ}(j-1),Z(j)=z_{j}]\frac{\mathbb{P}[Z(j)=z_{j}%
\mid\overline{LU}(j),\overline{AZ}(j-1)]}{f(z_{j}\mid\overline{L}%
(j),\overline{AZ}(j-1))}\\
&  = \sum_{z_{j}\in\{0,1\}}(-1)^{1-z_{j}}\mathbb{P}[A(j)=a_{j}\mid
\overline{LU}(j),\overline{AZ}(j-1),Z(j)=z_{j}]\\
&  = (-1)^{1-a_{j}}\delta_{j}(\overline{L}(j),\overline{AZ}(j-1)).\\
\end{align*}
Therefore, (\ref{lemma:1 eqn:2}) is
\begin{align*}
&  \sum_{a_{j}\in\{0,1\}}\mathbb{E}[g(\underline{A}(j)=\underline{a_{j}%
},\underline{L}_{\underline{A}(j)=\underline{a_{j}}}(j+1),\overline
{A}(j-1),\overline{L}(j)\mid\overline{LU}(j),\overline{AZ}(j-1)]\\
&  \quad\times\prod_{k=j}^{J-1}f^{\ast}(a_{k}\mid\overline{A}(j-1),A_{j}%
=a_{j},\ldots,A_{k-1}=a_{k-1}),
\end{align*}
as required.
\end{proof}

\begin{proof}
[Proof of Lemma 2]Consider the MSM indexed by $\beta\,\ $that solves
\[
\mathbb{E}\left\{  \frac{D_{sm}\left(  h,\beta_{0}\right)  }{\overline
{\mathcal{W}}^{\dag}}\right\}  =0
\]
at $\beta=\beta_{0}=\beta_{0}\left(  h\right)  $ for all $h\in H$, functions
of $\left(  \overline{A},V\right)  $ such that $D_{sm}\left(  h,\beta
_{0}\right)  /\overline{\mathcal{W}}^{\dag}$ is in the Hilbert space $L_{2}$
of functions with finite variance. \ Let $\left\{  F_{t}\left(  \overline
{A},\overline{L}\right)  :t\in\left(  -\epsilon,\epsilon\right)  \right\}  $
denote a regular parametric submodel for a unit's observed data distribution
indexed by a scalar parameter $t$ such that $F_{t=0}\left(  \overline
{A},\overline{L}\right)  =F\left(  \overline{A},\overline{L}\right)  $
generated the observed data. We have that
\[
\mathbb{E}_{t}\left\{  \frac{D_{sm}\left(  h,\beta\left(  F_{t}\right)
\right)  }{\overline{\mathcal{W}}_{t}^{\dag}}\right\}  =0\text{ for all }%
t\in\left(  -\epsilon,\epsilon\right)  ,
\]
and therefore
\begin{align*}
0  &  =\nabla_{t}\mathbb{E}_{t}\left\{  \frac{D_{sm}\left(  h,\beta\left(
F_{t}\right)  \right)  }{\overline{\mathcal{W}}_{t}^{\dag}}\right\} \\
&  =\mathbb{E}\left\{  \frac{D_{sm}\left(  h,\beta\left(  F_{t}\right)
\right)  }{\overline{\mathcal{W}}_{t}^{\dag}}\mathbb{S}\right\}
+\mathbb{E}_{t}\left\{  \frac{D_{sm}\left(  h,\beta\left(  F_{t}\right)
\right)  }{\overline{\mathcal{W}}^{\dag2}}\nabla_{t}\overline{\mathcal{W}}%
_{t}^{\dag}\right\} \\
&  +\mathbb{E}\left\{  \frac{\nabla_{\beta}D_{sm}\left(  h,\beta\right)
}{\overline{\mathcal{W}}_{t}^{\dag}}\right\}  \nabla_{t}\beta\left(
F_{t}\right)
\end{align*}
Consider term
\begin{align*}
&  \mathbb{E}\left\{  \frac{D_{sm}\left(  h,\beta\left(  F\right)  \right)
}{\overline{\mathcal{W}}^{\dag2}}\nabla_{t}\overline{\mathcal{W}}_{t}^{\dag
}\right\} \\
&  =\mathbb{E}\left\{  \sum_{k=0}^{J-1}\frac{D_{sm}\left(  h,\beta\left(
F\right)  \right)  }{\overline{\mathcal{W}}^{\dag2}}\left\{
{\displaystyle\prod\limits_{j\neq k}}
\mathcal{W}^{\dag}\left(  j\right)  \right\}  \nabla_{t}\mathcal{W}_{t}^{\dag
}(k)\right\} \\
&  =\mathbb{E}\left\{  \sum_{k=0}^{J-1}\frac{D_{sm}\left(  h,\beta\left(
F\right)  \right)  }{\left\{
{\displaystyle\prod\limits_{j\neq k}}
\mathcal{W}^{\dag}\left(  j\right)  \right\}  }\frac{\nabla_{t}\mathcal{W}%
_{t}^{\dag}(k)}{\mathcal{W}^{\dag}(k)^{2}}\right\}
\end{align*}
Next consider term
\begin{align*}
&  \frac{\nabla_{t}\mathcal{W}_{t}^{\dag}(k)}{\mathcal{W}^{\dag}(k)^{2}}\\
&  =\frac{\nabla_{t}\mathcal{W}_{t,1}^{\dag}\left(  k\right)  \mathcal{W}%
_{t,2}^{\dag}(k)}{\mathcal{W}_{1}^{\dag}\left(  k\right)  ^{2}\mathcal{W}%
_{2}^{\dag}(k)^{2}}\\
&  =\frac{\mathbb{S}\left(  Z\left(  k\right)  \right)  }{\mathcal{W}^{\dag
}(k)}\\
&  +\frac{\nabla_{t}\delta_{k,t}\left(  \overline{L}(k),\overline{A}\left(
k-1\right)  ,\overline{Z}\left(  k-1\right)  \right)  }{\mathcal{W}^{\dag
}(k)\delta_{k}\left(  \overline{L}(k),\overline{A}\left(  k-1\right)
,\overline{Z}\left(  k-1\right)  \right)  }%
\end{align*}
where $S\left(  Z\left(  k\right)  \right)  $ is the score function of
$f_{t}\left(  Z(k)|\overline{L}(k),\overline{A}(k-1),\overline{Z}(k-1)\right)
$ and $S\left(  A(k)\right)  $ is the score function of $f_{t}\left(
A(k)|\overline{L}(k),\overline{A}(k-1),\overline{Z}(k)\right)  $. Further
noting that
\begin{align*}
&  \nabla_{t}\delta_{k,t}\left(  \overline{A}\left(  k-1\right)  ,\overline
{Z}\left(  k-1\right)  ,\overline{L}(k)\right) \\
&  =\mathbb{E}\left\{  \mathbb{S}\left(  A(k)\right)  \left(  \frac{\left(
-1\right)  ^{1-Z(k)}\left(
\begin{array}
[c]{c}%
A(k)\\
-\mathbb{E}\left(  A(k)|\overline{A}\left(  k-1\right)  ,\overline{Z}\left(
k-1\right)  ,\overline{L}(k)\right)
\end{array}
\right)  }{f\left(  Z(k)|\overline{L}(k),\overline{A}(k-1),\overline
{Z}(k-1)\right)  }\right)  |\overline{A}\left(  k-1\right)  ,\overline
{Z}\left(  k-1\right)  ,\overline{L}(k)\right\}
\end{align*}
we have%
\begin{align*}
&  \mathbb{E}\left\{  \frac{D_{sm}\left(  h,\beta\left(  F\right)  \right)
}{\overline{\mathcal{W}}^{\dag2}}\nabla_{t}\overline{\mathcal{W}}_{t}^{\dag
}\right\} \\
&  =\mathbb{E}\left\{
\begin{array}
[c]{c}%
\sum_{k=0}^{J-1}\frac{1}{\left\{
{\displaystyle\prod\limits_{j<k}}
\mathcal{W}^{\dag}\left(  j\right)  \right\}  }\frac{\left(  -1\right)
^{1-Z(k)}\mathbb{S}\left(  Z\left(  k\right)  \right)  }{f\left(
Z(k)|\overline{L}(k),\overline{A}(k-1),\overline{Z}(k-1)\right)  }\\
\times\mathbb{E}\left[  \frac{D_{sm}\left(  h,\beta\left(  F\right)  \right)
}{\delta_{k}\left(  \overline{A}\left(  k-1\right)  ,\overline{Z}\left(
k-1\right)  \right)  \mathcal{W}_{2}^{\dag}(k)\left\{
{\displaystyle\prod\limits_{j>k}}
\mathcal{W}^{\dag}\left(  j\right)  \right\}  }|\overline{A}(k-1),\overline
{L}(k),\overline{Z}(k)\right]
\end{array}
\right\} \\
&  +\mathbb{E}\left\{
\begin{array}
[c]{c}%
\sum_{k=0}^{J-1}\frac{1}{\left\{
{\displaystyle\prod\limits_{j<k}}
\mathcal{W}^{\dag}\left(  j\right)  \right\}  }\mathbb{S}\left(  A(k)\right)
\left(  \frac{\left(  -1\right)  ^{1-Z(k)}\left(  A(k)-\mathbb{E}\left(
A(k)|\overline{A}\left(  k-1\right)  ,\overline{Z}\left(  k-1\right)
,\overline{L}(k)\right)  \right)  }{\delta_{k}\left(  \overline{A}\left(
k-1\right)  ,\overline{Z}\left(  k-1\right)  \right)  f\left(  Z(k)|\overline
{L}(k),\overline{A}(k-1),\overline{Z}(k-1)\right)  }\right) \\
\times\mathbb{E}\left[  \frac{D_{sm}\left(  h,\beta\left(  F\right)  \right)
}{\left\{
{\displaystyle\prod\limits_{j\geq k}}
\mathcal{W}^{\dag}\left(  j\right)  \right\}  }|\overline{A}(k-1),\overline
{L}(k),\overline{Z}(k)\right]
\end{array}
\right\} \\
&  =\mathbb{E}\left\{
\begin{array}
[c]{c}%
\sum_{k=0}^{J-1}\frac{\mathbb{S}\left(  Z\left(  k\right)  \right)  }{\left\{
%
{\displaystyle\prod\limits_{j<k}}
\mathcal{W}^{\dag}\left(  j\right)  \right\}  }\left(  \frac{\left(
-1\right)  ^{1-Z(k)}}{f\left(  Z(k)|\overline{L}(k),\overline{A}%
(k-1),\overline{Z}(k-1)\right)  }-1\right) \\
\mathbb{\times E}\left[  \frac{D_{sm}\left(  h,\beta\left(  F\right)  \right)
}{\Delta_{k}\mathcal{W}_{2}^{\dag}(k)\left\{
{\displaystyle\prod\limits_{j>k}}
\mathcal{W}^{\dag}\left(  j\right)  \right\}  }|\overline{A}(k-1),\overline
{L}(k),\overline{Z}(k-1)\right]
\end{array}
\right\} \\
&  +\mathbb{E}\left\{
\begin{array}
[c]{c}%
\sum_{k=0}^{J-1}\frac{1}{\left\{
{\displaystyle\prod\limits_{j<k}}
\mathcal{W}^{\dag}\left(  j\right)  \right\}  }\mathbb{S}\left(  A(k)\right)
\left(  \frac{\left(  -1\right)  ^{1-Z(k)}\left(  A(k)-\mathbb{E}\left(
A(k)|\overline{A}\left(  k-1\right)  ,\overline{Z}\left(  k-1\right)
,\overline{L}(k)\right)  \right)  }{\delta_{k}\left(  \overline{A}\left(
k-1\right)  ,\overline{Z}\left(  k-1\right)  \right)  f\left(  Z(k)|\overline
{L}(k),\overline{A}(k-1),\overline{Z}(k-1)\right)  }\right) \\
\times\mathbb{E}\left[  \frac{D_{sm}\left(  h,\beta\left(  F\right)  \right)
}{\left\{
{\displaystyle\prod\limits_{j\geq k}}
\mathcal{W}^{\dag}\left(  j\right)  \right\}  }|\overline{A}(k-1),\overline
{L}(k),\overline{Z}(k-1)\right]
\end{array}
\right\} \\
&  =\mathbb{E}\left\{  \mathbb{S}\left(  O\right)  \sum_{k=0}^{J-1}\frac
{1}{\overline{\mathcal{W}}^{\dag}(k-1)}\left\{  \frac{\left(  -1\right)
^{1-Z(k)}\Psi_{k}\left(  \widehat{\beta}_{np}\left(  h\right)  \right)
}{f\left(  Z(k)|\overline{L}(k),\overline{A}(k-1),\overline{Z}(k-1)\right)
}-\widetilde{\Psi}_{k}\left(  \widehat{\beta}_{np}\left(  h\right)  \right)
\right\}  \right\} \\
&  +\mathbb{E}\left\{  \mathbb{S}\left(  O\right)  \sum_{k=0}^{J-1}%
\frac{\epsilon_{k}\widetilde{\Psi}_{k}\left(  \widehat{\beta}_{np}\left(
h\right)  \right)  }{\overline{\mathcal{W}}^{\dag}(k-1)\mathcal{W}_{1}^{\dag
}(k)}\right\}
\end{align*}
Therefire, we conclude that
\begin{align*}
0  &  =\nabla_{t}\mathbb{E}_{t}\left\{  \frac{D_{sm}\left(  h,\beta\left(
F_{t}\right)  \right)  }{\overline{\mathcal{W}}_{t}^{\dag}}\right\} \\
&  =\mathbb{E}\left\{  \frac{D_{sm}\left(  h,\beta\left(  F\right)  \right)
}{\overline{\mathcal{W}}_{t}^{\dag}}\mathbb{S}\right\}  -+\mathbb{E}\left\{
\frac{D_{sm}\left(  h,\beta\left(  F\right)  \right)  }{\overline{\mathcal{W}%
}^{\dag2}}\nabla_{t}\overline{\mathcal{W}}_{t}^{\dag}\right\} \\
&  +\mathbb{E}\left\{  \frac{\nabla_{\beta}D_{sm}\left(  h,\beta\right)
}{\overline{\mathcal{W}}_{t}^{\dag}}\right\}  \nabla_{t}\beta\left(
F_{t}\right) \\
&  =\mathbb{E}\left\{  \mathbb{S}\left(  O\right)  \left[
\begin{array}
[c]{c}%
\frac{D_{sm}\left(  h,\beta\left(  F\right)  \right)  }{\overline{\mathcal{W}%
}^{\dag}}-\sum_{k=0}^{J-1}\frac{1}{\overline{\mathcal{W}}^{\dag}(k-1)}\left\{
\begin{array}
[c]{c}%
\frac{\left(  -1\right)  ^{1-Z(k)}\Psi_{k}\left(  \beta\left(  F\right)
\right)  }{f\left(  Z(k)|\overline{L}(k),\overline{A}(k-1),\overline
{Z}(k-1)\right)  }\\
-\widetilde{\Psi}_{k}\left(  \beta\left(  F\right)  \right)
\end{array}
\right\} \\
-\sum_{k=0}^{J-1}\frac{\epsilon_{k}\widetilde{\Psi}_{k}\left(  \beta\left(
F\right)  \right)  }{\overline{\mathcal{W}}^{\dag}(k-1)\mathcal{W}_{1}^{\dag
}(k)}%
\end{array}
\right]  \right\} \\
&  +\mathbb{E}\left\{  \frac{\nabla_{\beta}D_{sm}\left(  h,\beta\right)
}{\overline{\mathcal{W}}_{t}^{\dag}}\right\}  \nabla_{t}\beta\left(
F_{t}\right)
\end{align*}
proving the result.
\end{proof}

\bigskip

\begin{proof}
[Proof of triple robustness of $\widehat{\beta}_{mr}$]It suffices to show
that
\[
\mathbb{E}\left[
\begin{array}
[c]{c}%
\frac{D_{sm}\left(  h,\beta_{0}\right)  }{\overline{\mathcal{W}}^{\dag\ast}%
}-\sum_{k=0}^{J-1}\frac{1}{\overline{\mathcal{W}}^{\dag\ast}(k-1)}\left\{
\frac{\left(  -1\right)  ^{1-Z(k)}\Psi_{k}^{\ast}\left(  \beta_{0}\right)
}{f^{\ast}\left(  Z(k)|\overline{L}(k),\overline{A}(k-1),\overline
{Z}(k-1)\right)  }-\widetilde{\Psi}_{k}^{\ast}\left(  \beta_{0}\right)
\right\} \\
-\sum_{k=0}^{J-1}\frac{\epsilon_{k}^{\ast}\widetilde{\Psi}_{k}^{\ast}\left(
\beta_{0}\right)  }{\overline{\mathcal{W}}^{\dag\ast}(k-1)\mathcal{W}%
_{1}^{\ast\dag}(k)}%
\end{array}
\right]  =0
\]

provided that either

(i) $\overline{\mathcal{W}}^{\dag\ast}=\overline{\mathcal{W}}^{\dag}$, i.e.
$f^{\ast}\left(  Z(j)|\overline{L}(j),\overline{A}(j-1),\overline
{Z}(j-1)\right)  =f\left(  Z(j)|\overline{L}(j),\overline{A}(j-1),\overline
{Z}(j-1)\right)  $ and \newline$\delta_{j}^{\ast}\left(  \overline
{L}(j),\overline{A}\left(  j-1\right)  ,\overline{Z}\left(  j-1\right)
\right)  =\delta_{j}\left(  \overline{L}(j),\overline{A}\left(  j-1\right)
,\overline{Z}\left(  j-1\right)  \right)  $ for all $j;$ or

(ii) $f^{\ast}\left(  Z(j)|\overline{L}(j),\overline{A}(j-1),\overline
{Z}(j-1)\right)  =f\left(  Z(j)|\overline{L}(j),\overline{A}(j-1),\overline
{Z}(j-1)\right)  $ and $\Gamma_{j}^{1\ast}\left(  \beta_{0}\right)
=\Gamma_{j}^{1}\left(  \beta_{0}\right)  $ for all $j$; or

(iii) $\Gamma_{j}^{1\ast}\left(  \beta_{0}\right)  =\Gamma_{j}^{1}\left(
\beta_{0}\right)  $, $\Gamma_{j}^{0\ast}\left(  \beta_{0}\right)  =\Gamma
_{j}^{0}\left(  \beta_{0}\right)  ,$ and $E^{\ast}\left(  A(j)|\overline
{A}\left(  j-1\right)  ,\overline{L}\left(  j\right)  ,\overline{Z}\left(
j\right)  \right)  =E\left(  A(j)|\overline{A}\left(  j-1\right)
,\overline{L}\left(  j\right)  ,\overline{Z}\left(  j\right)  \right)  .$

The result for (i) holds because $E\left\{  D_{sm}\left(  h,\beta_{0}\right)
/\overline{\mathcal{W}}^{\dag}\right\}  =0,$
\[
\mathbb{E}\left[  \sum_{k=0}^{J-1}\mathbb{E}\left\{
\begin{array}
[c]{c}%
\left(  -1\right)  ^{1-Z(k)}\frac{\Psi_{k}^{\ast}\left(  \beta_{0}\right)
}{f\left(  Z(k)|\overline{L}(k),\overline{A}(k-1),\overline{Z}(k-1)\right)
}\\
-\widetilde{\Psi}_{k}^{\ast}\left(  \beta_{0}\right)  |\overline{L}\left(
k\right)  ,\overline{A}\left(  k-1\right)  ,\overline{Z}\left(  k-1\right)
\end{array}
\right\}  \right]  =0,
\]
and $E\left\{  \epsilon_{k}^{\ast}/\mathcal{W}_{1}^{\dag}(k)|\overline
{A}\left(  k-1\right)  ,\overline{L}\left(  k\right)  ,\overline{Z}\left(
k-1\right)  \right\}  =0,$ because%
\begin{align*}
&  \mathbb{E}\left\{  \frac{\left(  -1\right)  ^{1-Z(k)}\epsilon_{k}^{\ast}%
}{f\left(  Z(k)|\overline{L}(k),\overline{A}(k-1),\overline{Z}(k-1)\right)
\delta_{k}^{\ast}\left(  \overline{L}(k),\overline{A}\left(  k-1\right)
,\overline{Z}\left(  k-1\right)  \right)  }|\overline{A}\left(  k-1\right)
,\overline{L}\left(  k\right)  ,\overline{Z}\left(  k-1\right)  \right\} \\
&  =\mathbb{E}\left\{  \frac{\left(
\begin{array}
[c]{c}%
\left(  -1\right)  ^{1-Z(k)}(A(k)-\delta_{k}\left(  \overline{L}(k),
\overline{A}\left(  k-1\right)  ,\overline{Z}\left(  k-1\right)  \right)
Z(k)\\
-E^{\ast}\left(  A(k)|\overline{A}\left(  k-1\right)  ,\overline{L}\left(
k\right)  ,\overline{Z}\left(  k-1\right)  ,Z(k)=0\right)  )
\end{array}
\right)  }{f\left(  Z(k)|\overline{L}(k),\overline{A}(k-1),\overline
{Z}(k-1)\right)  \delta_{k}\left(  \overline{L}(k),\overline{A}\left(
k-1\right)  ,\overline{Z}\left(  k-1\right)  \right)  }|\overline{A}\left(
k-1\right)  ,\overline{L}\left(  k\right)  ,\overline{Z}\left(  k-1\right)
\right\} \\
&  =\frac{-\delta_{k}\left(  \overline{L}(k),\overline{A}\left(  k-1\right)
,\overline{Z}\left(  k-1\right)  \right)  -\delta_{k}\left(  \overline
{L}(k),\overline{A}\left(  k-1\right)  ,\overline{Z}\left(  k-1\right)
\right)  }{\delta_{k}\left(  \overline{L}(k),\overline{A}\left(  k-1\right)
,\overline{Z}\left(  k-1\right)  \right)  }\\
&  +\sum_{z(k)}\left(  -1\right)  ^{1-z(k)}\frac{
\begin{array}
[c]{c}%
-E^{\ast}\left(  A(k)|\overline{A}\left(  k-1\right)  ,\overline{L}\left(
k\right)  ,\overline{Z}\left(  k-1\right)  ,Z(k)=0\right)
\end{array}
}{\delta_{k}\left(  \overline{L}(k),\overline{A}\left(  k-1\right)
,\overline{Z}\left(  k-1\right)  \right)  }\\
&  =0
\end{align*}
Next, suppose that

(ii) $f^{\ast}\left(  Z(j)|\overline{L}(j),\overline{A}(j-1),\overline
{Z}(j-1)\right)  =f\left(  Z(j)|\overline{L}(j),\overline{A}(j-1),\overline
{Z}(j-1)\right)  $ and $\Gamma_{j}^{(1)\ast}\left(  \beta_{0}\right)
=\Gamma_{j}^{(1)}\left(  \beta_{0}\right)  =\widetilde{\Psi}_{j}\left(
\beta_{0}\right)  $ for all $j$. Then
\begin{align*}
&  \mathbb{E}\left[  \sum_{k=0}^{J-1}\frac{1}{\overline{\mathcal{W}}^{\dag
\ast}(k-1)}\left\{  \frac{\left(  -1\right)  ^{1-Z(k)}\Psi_{k}^{\ast}\left(
\beta_{0}\right)  }{f\left(  Z(k)|\overline{L}(k),\overline{A}(k-1),\overline
{Z}(k-1)\right)  }-\widetilde{\Psi}_{k}\left(  \beta_{0}\right)  \right\}
\right] \\
&  =\mathbb{E}\left[  \sum_{k=0}^{J-1}\frac{1}{\overline{\mathcal{W}}%
^{\dag\ast}(k-1)}\mathbb{E}\left\{
\begin{array}
[c]{c}%
\frac{\left(  -1\right)  ^{1-Z(k)}\Psi_{k}^{\ast}\left(  \beta_{0}\right)
}{f\left(  Z(k)|\overline{L}(k),\overline{A}(k-1),\overline{Z}(k-1)\right)
}\\
-\widetilde{\Psi}_{k}\left(  \beta_{0}\right)  |\overline{L}(k),\overline
{A}(k-1),\overline{Z}(k-1)
\end{array}
\right\}  \right] \\
&  =0
\end{align*}

furthermore,%
\begin{align*}
&  \mathbb{E}\left[  \frac{\epsilon_{J-1}^{\ast}\widetilde{\Psi}_{J-1}\left(
\beta_{0}\right)  }{\overline{\mathcal{W}}^{\dag\ast}(J-2)\mathcal{W}%
_{1}^{\dag\ast}(J-1)}\right] \\
&  =\mathbb{E}\left[  \left(  -1\right)  ^{1-Z\left(  J-1\right)  }%
\frac{\left(
\begin{array}
[c]{c}%
(A(J-1)-\delta_{J-1}^{\ast}\left(  \overline{L}(J-1),\overline{A}\left(
J-2\right)  ,\overline{Z}\left(  J-2\right)  \right)  Z(J-1)\\
-E^{\ast}\left(  A(J-1)|\overline{A}\left(  J-2\right)  ,\overline{L}\left(
J-1\right)  ,\overline{Z}\left(  J-2\right)  ,Z(J-1)=0\right)  )
\widetilde{\Psi}_{J-1}\left(  \beta_{0}\right)
\end{array}
\right)  }{\overline{\mathcal{W}}^{\dag\ast}(J-2)f\left(  Z(J-1)|\overline
{L}(J-1),\overline{A}(J-2),\overline{Z}(J-2)\right)  \delta_{J-1}^{\ast
}\left(  \overline{L}(J-1),\overline{A}\left(  J-2\right)  ,\overline
{Z}\left(  J-2\right)  \right)  }\right] \\
&  =\mathbb{E}\left[  \left(  -1\right)  ^{1-Z\left(  J-1\right)  }%
\frac{\left(
\begin{array}
[c]{c}%
(A(J-1)-\delta_{J-1}^{\ast}\left(  \overline{L}(J-1),\overline{A}\left(
J-2\right)  ,\overline{Z}\left(  J-2\right)  \right)  Z(J-1)\\
-E^{\ast}\left(  A(J-1)|\overline{A}\left(  J-2\right)  ,\overline{L}\left(
J-1\right)  ,\overline{Z}\left(  J-2\right)  ,Z(J-1)=0\right)  )
\widetilde{\Psi}_{J-1}\left(  \beta_{0}\right)
\end{array}
\right)  }{\overline{\mathcal{W}}^{\dag\ast}(J-2)f\left(  Z(J-1)|\overline
{L}(J-1),\overline{A}(J-2),\overline{Z}(J-2)\right)  \delta_{J-1}^{\ast
}\left(  \overline{L}(J-1),\overline{A}\left(  J-2\right)  ,\overline
{Z}\left(  J-2\right)  \right)  }\right] \\
&  =\mathbb{E}\left[  \frac{\delta_{J-1}\left(  \overline{L}(J-1),\overline
{A}\left(  J-2\right)  ,\overline{Z}\left(  J-2\right)  \right)  \Gamma
_{J-1}^{(1)}\left(  \beta_{0}\right)  }{\overline{\mathcal{W}}^{\dag\ast
}(J-2)\delta_{J-1}^{\ast}\left(  \overline{L}(J-1),\overline{A}\left(
J-2\right)  ,\overline{Z}\left(  J-2\right)  \right)  }\right]  -\mathbb{E}%
\left[  \frac{\delta_{J-1}^{\ast}\left(  \overline{L}(J-1),\overline{A}\left(
J-2\right)  ,\overline{Z}\left(  J-2\right)  \right)  \Gamma_{J-1}%
^{(1)}\left(  \beta_{0}\right)  }{\overline{\mathcal{W}}^{\dag\ast}%
(J-2)\delta_{J-1}^{\ast}\left(  \overline{L}(J-1),\overline{A}\left(
J-2\right)  ,\overline{Z}\left(  J-2\right)  \right)  }\right] \\
&  =\mathbb{E}\left[  \frac{\Delta_{J-1}\Gamma_{J-1}^{(1)}\left(  \beta
_{0}\right)  }{\overline{\mathcal{W}}^{\dag\ast}(J-2)\Delta_{J-1}^{\ast}%
}\right]  -\mathbb{E}\left[  \frac{\Gamma_{J-1}^{(1)}\left(  \beta_{0}\right)
}{\overline{\mathcal{W}}^{\dag\ast}(J-2)}\right]
\end{align*}
Likewise for all $1\le j<J-1$%
\[
\mathbb{E}\left[  \frac{\epsilon_{j}^{\ast}\widetilde{\Psi}_{j}\left(
\beta_{0}\right)  }{\overline{\mathcal{W}}^{\dag\ast}(j-1)\mathcal{W}%
_{1}^{\dag\ast}(j)}\right]  =\mathbb{E}\left[  \frac{\Delta_{j}\Gamma
_{j}^{(1)}\left(  \beta_{0}\right)  }{\overline{\mathcal{W}}^{\dag\ast
}(j-1)\Delta_{j}^{\ast}}\right]  -\mathbb{E}\left[  \frac{\Gamma_{j}%
^{(1)}\left(  \beta_{0}\right)  }{\overline{\mathcal{W}}^{\dag\ast}%
(j-1)}\right]
\]
and%
\begin{align*}
&  \mathbb{E}\left[  \frac{\widetilde{\Psi}_{j}\left(  \beta_{0}\right)
}{\overline{\mathcal{W}}^{\dag\ast}(j-1)}\right] \\
&  =\mathbb{E}\left[
\begin{array}
[c]{c}%
\frac{1}{\overline{\mathcal{W}}^{\dag\ast}(j-2)}\frac{\left(  -1\right)
^{1-Z(j-1)}}{f\left(  Z(j-1)|\overline{L}(j-1),\overline{A}(j-2),\overline
{Z}(j-2)\right)  }\\
\times\mathbb{E}\left[  \frac{\widetilde{\Psi}_{j}\left(  \beta_{0}\right)
}{\mathcal{W}_{2}^{\dag\ast}(j-1)\Delta_{J-1}^{\ast}}|\overline{L}%
(j-1),\overline{A}(j-2),\overline{Z}(j-1)\right]
\end{array}
\right] \\
&  =\mathbb{E}\left[  \frac{1}{\overline{\mathcal{W}}^{\dag\ast}(j-2)}%
\frac{\left(  -1\right)  ^{1-Z(j-1)}}{f\left(  Z(j-1)|\overline{L}%
(j-1),\overline{A}(j-2),\overline{Z}(j-2)\right)  }\frac{\Psi_{j-1}^{\ast
}\left(  \beta\right)  \Delta_{j-1}}{\Delta_{j-1}^{\ast}}\right] \\
&  =\mathbb{E}\left[  \frac{\widetilde{\Psi}_{j-1}\Delta_{j-1}}{\overline
{W}^{\dag\ast}(j-2)\Delta^{*}_{j-1}}\right]
\end{align*}
Therefore%
\begin{align*}
&  \mathbb{E}\left[
\begin{array}
[c]{c}%
\frac{D_{sm}\left(  h,\beta_{0}\right)  }{\overline{\mathcal{W}}^{\dag\ast}%
}-\sum_{k=0}^{J-1}\frac{1}{\overline{\mathcal{W}}^{\dag\ast}(k-1)}\left\{
\frac{\left(  -1\right)  ^{1-Z(k)}\Psi_{k}^{\ast}\left(  \beta_{0}\right)
}{f^{\ast}\left(  Z(k)|\overline{L}(k),\overline{A}(k-1),\overline
{Z}(k-1)\right)  }-\widetilde{\Psi}_{k}\left(  \beta_{0}\right)  \right\} \\
-\sum_{k=0}^{J-1}\frac{\epsilon_{k}^{\ast}\widetilde{\Psi}_{k}\left(
\beta_{0}\right)  }{\overline{\mathcal{W}}^{\dag\ast}(k-1)\mathcal{W}%
_{1}^{\dag\ast}(k)}%
\end{array}
\right] \\
&  =\mathbb{E}\left[  \frac{D_{sm}\left(  h,\beta_{0}\right)  }{\overline
{\mathcal{W}}^{\dag\ast}}\right]  -\mathbb{E}\left[  \sum_{k=0}^{J-1}%
\frac{\epsilon_{k}^{\ast}\widetilde{\Psi}_{k}\left(  \beta_{0}\right)
}{\overline{\mathcal{W}}^{\dag\ast}(k-1)\mathcal{W}_{1}^{\dag\ast}(k)}\right]
\\
&  =\mathbb{E}\left[  \frac{D_{sm}\left(  h,\beta_{0}\right)  }{\overline
{\mathcal{W}}^{\dag\ast}}\right]  - \sum_{k=0}^{J-1}\left\{  \mathbb{E}\left[
\frac{\Delta_{k}\Gamma_{k}^{(1)}\left(  \beta_{0}\right)  }{\overline
{\mathcal{W}}^{\dag\ast}(k-1)\Delta_{k}^{\ast}}\right]  -\mathbb{E}\left[
\frac{\Gamma_{k}^{(1)}\left(  \beta_{0}\right)  }{\overline{\mathcal{W}}%
^{\dag\ast}(k-1)}\right]  \right\} \\
&  =\mathbb{E}\left[  \frac{\Delta_{J-1}\Gamma_{J-1}^{(1)}\left(  \beta
_{0}\right)  }{\overline{\mathcal{W}}^{\dag\ast}(J-2)\Delta_{J-1}^{\ast}%
}\right]  - \sum_{k=0}^{J-1}\left\{  \mathbb{E}\left[  \frac{\Delta_{k}%
\Gamma_{k}^{(1)}\left(  \beta_{0}\right)  }{\overline{\mathcal{W}}^{\dag\ast
}(k-1)\Delta_{k}^{\ast}}\right]  -\mathbb{E}\left[  \frac{\Gamma_{k}%
^{(1)}\left(  \beta_{0}\right)  }{\overline{\mathcal{W}}^{\dag\ast}%
(k-1)}\right]  \right\} \\
&  =\mathbb{E}\left[  \frac{\Delta_{J-1}\Gamma_{J-1}^{(1)}\left(  \beta
_{0}\right)  }{\overline{\mathcal{W}}^{\dag\ast}(J-2)\Delta_{J-1}^{\ast}%
}\right]  -\mathbb{E}\left[  \frac{\Delta_{J-1}\Gamma_{J-1}^{(1)}\left(
\beta_{0}\right)  }{\overline{\mathcal{W}}^{\dag\ast}(J-2)\Delta_{J-1}^{\ast}%
}\right]  +\sum_{k=0}^{J-1}\left\{
\begin{array}
[c]{c}%
\mathbb{E}\left[  \frac{\Gamma_{k}^{(1)}\left(  \beta_{0}\right)  }%
{\overline{\mathcal{W}}^{\dag\ast}(k-1)}\right] \\
-\mathbb{E}\left[  \frac{\Delta_{k-1}\Gamma_{k-1}^{(1)}\left(  \beta
_{0}\right)  }{\overline{\mathcal{W}}^{\dag\ast}(k-2)\Delta_{k-1}^{\ast}%
}\right]
\end{array}
\right\} \\
&  =\mathbb{E}\left[  \Gamma_{0}^{(1)}\left(  \beta_{0}\right)  \right]  =0
\end{align*}

where $\Delta_{-1}\equiv0$. $\overline{\mathcal{W}}^{\dag\ast}(-2)=\overline
{\mathcal{W}}^{\dag\ast}(-1)=1,\Delta_{-1}^{\ast}=1.$ Finally, \ suppose that
(iii) $\Gamma_{j}^{1\ast}\left(  \beta_{0}\right)  =\Gamma_{j}^{1}\left(
\beta_{0}\right)  $, $\Gamma_{j}^{0\ast}\left(  \beta_{0}\right)  =\Gamma
_{j}^{0}\left(  \beta_{0}\right)  $ and $E^{\ast}\left(  A(j)|\overline
{A}\left(  j-1\right)  ,\overline{L}\left(  j\right)  ,\overline{Z}\left(
j\right)  \right)  =E\left(  A(j)|\overline{A}\left(  j-1\right)
,\overline{L}\left(  j\right)  ,\overline{Z}\left(  j\right)  \right)  ,$ then
note that
\begin{align*}
&  \mathbb{E}\left\{  \sum_{k=0}^{J-1}\frac{\epsilon_{k}^{\ast}\widetilde{\Psi
}_{k}^{\ast}\left(  \beta_{0}\right)  }{\overline{\mathcal{W}}^{\dag\ast
}(k-1)\mathcal{W}_{1}^{\dag\ast}(k)}\right\} \\
&  =\mathbb{E}\left\{  \sum_{k=0}^{J-1}\frac{\mathbb{E}\left[  \epsilon
_{k}^{\ast}|\overline{A}\left(  k-1\right)  ,\overline{L}\left(  k\right)
,\overline{Z}\left(  k\right)  \right]  \widetilde{\Psi}_{k}^{\ast}\left(
\beta_{0}\right)  }{\overline{\mathcal{W}}^{\dag\ast}(k-1)\mathcal{W}%
_{1}^{\dag\ast}(k)}\right\} \\
&  =\mathbb{E}\left\{  \sum_{k=0}^{J-1}\frac{\mathbb{E}\left[  \epsilon
_{k}|\overline{A}\left(  k-1\right)  ,\overline{L}\left(  k\right)
,\overline{Z}\left(  k\right)  \right]  \widetilde{\Psi}_{k}^{\ast}\left(
\beta_{0}\right)  }{\overline{\mathcal{W}}^{\dag\ast}(k-1)\mathcal{W}%
_{1}^{\dag\ast}(k)}\right\} \\
&  =0
\end{align*}
Therefore
\begin{align*}
&  \mathbb{E}\left[
\begin{array}
[c]{c}%
\frac{D_{sm}\left(  h,\beta_{0}\right)  }{\overline{\mathcal{W}}^{\dag\ast}%
}-\sum_{k=0}^{J-1}\frac{1}{\overline{\mathcal{W}}^{\dag\ast}(k-1)}\left\{
\frac{\left(  -1\right)  ^{1-Z(k)}\Psi_{k}^{\ast}\left(  \beta_{0}\right)
}{f^{\ast}\left(  Z(k)|\overline{L}(k),\overline{A}(k-1),\overline
{Z}(k-1)\right)  }-\widetilde{\Psi}_{k}\left(  \beta_{0}\right)  \right\} \\
-\sum_{k=0}^{J-1}\frac{\epsilon_{k}^{\ast}\widetilde{\Psi}_{k}\left(
\beta_{0}\right)  }{\overline{\mathcal{W}}^{\dag\ast}(k-1)\mathcal{W}%
_{1}^{\dag\ast}(k)}%
\end{array}
\right] \\
&  =\mathbb{E}\left[  \frac{D_{sm}\left(  h,\beta_{0}\right)  }{\overline
{\mathcal{W}}^{\dag\ast}}-\sum_{k=0}^{J-1}\frac{1}{\overline{\mathcal{W}%
}^{\dag\ast}(k-1)}\left\{  \frac{\left(  -1\right)  ^{1-Z(k)}\Psi_{k}\left(
\beta_{0}\right)  }{f^{\ast}\left(  Z(k)|\overline{L}(k),\overline
{A}(k-1),\overline{Z}(k-1)\right)  }-\widetilde{\Psi}_{k}\left(  \beta
_{0}\right)  \right\}  \right] \\
&  =\mathbb{E}\left[  \frac{D_{sm}\left(  h,\beta_{0}\right)  -\Psi
_{J-1}\left(  \beta_{0}\right)  \Delta_{J-1}\mathcal{W}_{2}^{\dag\ast}%
(J-1)}{\overline{\mathcal{W}}^{\dag\ast}}+\sum_{k=0}^{J-1}\frac
{\widetilde{\Psi}_{k}\left(  \beta_{0}\right)  -\Psi_{k-1}\left(  \beta
_{0}\right)  \Delta_{k-1}\mathcal{W}_{2}^{\dag\ast}(k-1)}{\overline
{\mathcal{W}}^{\dag\ast}(k-1)}\right] \\
&  =\mathbb{E}\left[  \Gamma_{0}^{(1)}\left(  \beta_{0}\right)  \right]  =0,
\end{align*}
proving the result. \newline
\end{proof}

\end{document}